\newtheorem{theorem}{Theorem}
\newtheorem{corollary}[theorem]{Corollary}
\newtheorem{lemma}[theorem]{Lemma}
\newtheorem{proposition}[theorem]{Proposition}
\theoremstyle{definition}
\newtheorem{example}[theorem]{Example}
\newtheorem{remark}[theorem]{Remark}
\newtheorem{question}[theorem]{Question}
\numberwithin{equation}{section}
\newcommand{\ep}{\varepsilon}
\newcommand{\ES}{\mathrm{ES}}
\newcommand{\Var}{\mathrm{Var}}
\newcommand{\abs}[1]{\lvert#1\rvert}
\newcommand{\norm}[1]{\lVert#1\rVert}
\newcommand{\bigabs}[1]{\bigl\lvert#1\bigr\rvert}
\newcommand{\bignorm}[1]{\bigl\lVert#1\bigr\rVert}
\newcommand{\Bigabs}[1]{\Bigl\lvert#1\Bigr\rvert}
\newcommand{\Bignorm}[1]{\Bigl\lVert#1\Bigr\rVert}
\newcommand{\N}{{\mathbb N}}
\newcommand{\R}{{\mathbb R}}
\newcommand{\C}{{\mathbb C}}
\newcommand{\E}{{\mathbb E}}
\newcommand{\cD}{{\mathcal D}}
\newcommand{\cX}{{\mathcal X}}
\newcommand{\cA}{{\mathcal A}}
\newcommand{\cK}{{\mathcal K}}
\newcommand{\cC}{{\mathcal C}}
\newcommand{\cF}{{\mathcal F}}
\newcommand{\bP}{{\mathbb P}}
\def\one{\mathbb 1}
\title{The strong Fatou property of risk measures}
\author[S.~Chen]{Shengzhong Chen}
\address{Department of Mathematics, Ryerson University, Canada}
\email{sheng.zhong.chen@ryerson.ca}
\author[N.~Gao]{Niushan Gao}
\address{Department of Mathematics, Ryerson University, Canada}
\email{niushan@ryerson.ca}
\author[F.~Xanthos]{Foivos Xanthos}
\address{Department of Mathematics, Ryerson University, Canada}
\email{foivos@ryerson.ca}
\keywords{Fatou property, strong Fatou property, super Fatou property, dual representations, law-invariant risk measures, surplus-invariant risk measures , inf-convolutions}
\subjclass[2010]{91G80, 46E30, 46A20}
\thanks{The authors thank NSERC for financial support.}
\date{\today}
\begin{document}
\maketitle

\begin{abstract}
In this paper, we  explore several Fatou-type properties of risk measures. The paper continues to reveal that the strong Fatou property, which was introduced in \cite{GX:18}, seems to be most suitable to ensure nice dual representations of risk measures. Our main result asserts that every quasiconvex law-invariant functional on a rearrangement invariant space $\cX$ with the strong Fatou property is $\sigma(\cX,L^\infty)$ lower semicontinuous and that the converse is true on a wide range of rearrangement invariant spaces.
We also study inf-convolutions of law-invariant or surplus-invariant risk measures that preserve the (strong) Fatou property.
\end{abstract}

\maketitle

\section{Introduction}

In the early stage of the axiomatic theory of risk measures, the model space $\cX$ is usually taken to be an $L^p$-space. The increasing use of heavily-tailed distributions in risk modelling has led to more general choices of $\cX$, such as Orlicz spaces, Orlicz hearts and other rearrangement invariant spaces (see e.g. \cite{BC:17,BF:09,FG:02,GLMX:18,GLX:16,GM:18,GX:18,LS:18,LS:17,P:13}).
On these model spaces, when one deals with optimization problems, convex duality techniques are desirable and are available as soon as the risk measures involved admit tractable dual representations.
When $\cX=L^p$, this is ensured if the risk measures have the Fatou property.
When $\cX$ is a general Orlicz space $L^\Phi$,  the Fatou property, however,  no longer guarantees tractable dual representations (\cite{GLX:16}). In order to overcome this obstacle, the last two authors of the present paper introduced the strong Fatou property in \cite{GX:18}, which turns out to be the right continuity adjustment in the Orlicz space framework.
This paper continues to investigate Fatou-type properties of risk measures and to highlight the importance of the strong Fatou property.

Through out this paper the model spaces $\cX$ we work on  are \emph{function spaces} over a fixed probability space $(\Omega,\cF,\bP)$, i.e., order ideals of $L^0:=L^0(\Omega,\cF,\bP)$.  Orlicz spaces, including $L^p$ ($1\leq p\leq \infty$), are typical function spaces. We refer to the appendix for some notation and facts on function spaces, in particular, on \emph{rearrangement invariant (r.i.) spaces}. As usual, we do not distinguish two random variables that are almost surely equal.

All functionals $\rho:\cX \rightarrow (-\infty,\infty]$ considered in this paper are \emph{proper}, i.e., not identically $\infty$, unless otherwise stated. $\rho:\cX \rightarrow (-\infty,\infty]$ is {\em convex} if $\rho(\lambda X+(1-\lambda)Y)\leq\lambda\rho(X)+(1-\lambda)\rho(Y)$ for any $X,Y\in \cX$ and $\lambda\in[0,1]$, and is {\em quasiconvex} if the sublevel set $\{\rho\leq m\}:=\{X\in \cX : \rho(X)\leq m\}$ is convex for every $m\in\R$. For a fixed nonzero positive vector $S\in \cX$, $\rho$ is {\em $S$-additive} if $\rho(X+mS)=\rho(X)-m$ for any $X\in\cX$ and $m\in\R$. In the case of $S=\one_\Omega$ we say that $\rho$ is \emph{cash-additive}.
$\rho$ is \emph{law-invariant} if $\rho(X)=\rho(Y)$ whenever $X,Y\in \cX$ have the same distribution, is \emph{surplus-invariant} if $\rho(X)=\rho(-X^-) $ for every $X\in\cX$, and is \emph{surplus-invariant subject to positivity} if $\rho(X)=\rho(-X^-)$ for every $X\in\cX$ such that $\rho(X)>0$.

For a locally convex topology $\tau$ on $\cX$, $\rho:\cX \rightarrow (-\infty,\infty]$ is \emph{$\tau$ lower semicontinuous} if
$\{\rho\leq\lambda\} $ is $\tau$-closed for every $\lambda\in\R$.
Clearly, the coarser $\tau$ is, the stronger the $\tau$ lower semicontinuity is. The well-known Fechel-Moreau duality asserts that a convex functional $\rho:\cX \rightarrow (-\infty,\infty]$ is $\tau$ lower semicontinuous if and only if it admits a dual representation via the topological dual $(X,\tau)^*$. We say that $\rho:\cX \rightarrow (-\infty,\infty]$ has the
\begin{enumerate}
\item \emph{Fatou property} if $\rho(X)\leq \liminf_n\rho(X_n)$ whenever $(X_n)\subset \cX$ and $X\in \cX$ satisfy $X_n\stackrel{o}{\longrightarrow}X$ in $\cX$, i.e., $X_n\xrightarrow{a.s.}X$ and $\abs{X_n}\leq X_0$ for some $X_0\in\cX$ and all $n\in\N$,
\item \emph{super Fatou property} if $\rho(X)\leq \liminf_n\rho(X_n)$ whenever $(X_n)\subset \cX$ and $X\in \cX$ satisfy $X_n\xrightarrow{a.s.}X$,
\item  \emph{strong Fatou property} if $\cX$ carries a norm and $\rho(X)\leq \liminf_n\rho(X_n)$ whenever $(X_n)\subset \cX$ and $X\in \cX$ satisfy $X_n\xrightarrow{a.s.}X$ and $(X_n)$ is norm bounded.
\end{enumerate}
Clearly, the strong Fatou property is intermediate among these three Fatou-type properties, stronger than the Fatou property and weaker than the super Fatou property. It is also clear that the strong Fatou property and the Fatou property coincide on $L^\infty$. Moreover, as is well-known, the Fatou property is generally stronger than norm lower semicontinuity  but coincides with it when the underlying model space $\cX$ has order continuous norm.

It has been well known since \cite{D:02} that a quasiconvex functional $\rho$ on $L^\infty$ is $\sigma(L^\infty,L^1)$ lower semicontinuous if and only if it has the  Fatou property. When $\rho$ is additionally law-invariant, it was proved that $\rho$ has the Fatou property if and only if it is norm lower semicontinuous (\cite{JS:}), if and only if it is $\sigma(L^\infty,L^\infty)$ lower semicontinuous (\cite{FS:12}).
Recently, it was proved in \cite{GLX:16} that a convex functional on an Orlicz space $L^\Phi$ with the Fatou property may fail the $\sigma(L^\Phi,L^\Psi)$ lower semicontinuity, where $\Psi$ is the conjugate function of $\Phi$. Nonetheless, \cite{GX:18} showed that a quasiconvex functional $\rho:L^\Phi\rightarrow(-\infty,\infty]$ has the strong Fatou property if and only if it is $\sigma(L^\Phi,H^\Psi)$ lower semicontinuous, where $H^\Psi$ is the heart of $L^\Psi$.
When $\rho$ is additionally law-invariant,   \cite{GLMX:18} showed that the strong Fatou property of $\rho$ is equivalent to the Fatou property and to $\sigma(L^\Phi,L^\Psi)$ (respectively, $\sigma(L^\Phi,H^\Psi)$, $\sigma(L^\Phi,L^\infty)$) lower semicontinuity, but in general, not to norm lower semicontinuity. Furthermore, if a quasiconvex functional $\rho:\cX\rightarrow(-\infty,\infty]$ is surplus-invariant  or is surplus-invariant subject to positivity and $S$-additive for some $0<S\in\cX$, it is shown in \cite{GM:18} that the strong Fatou property of $\rho$ is equivalent to the Fatou property and to the super Fatou property, and in the case of $\cX=L^\Phi$, they are all equivalent to $\sigma(L^\Phi,L^\Psi)$ (respectively, $\sigma(L^\Phi,H^\Psi)$, $\sigma(L^\Phi,L^\infty)$) lower semicontinuity as well.

The main result of this paper  asserts that any quasiconvex, law-invariant functional $\rho$ on an r.i.\ space $\cX$ with the strong Fatou property is $\sigma(\cX,L^\infty)$ lower semicontinuous (Theorem~\ref{s-Fatou}). We also study the relations between the strong Fatou property, $\sigma(\cX,L^\infty)$ lower semicontinuity, and the Fatou property. We show that the strong Fatou property of a quasiconvex law-invariant functional $\rho$ is ``almost'' equivalent to $\sigma(\cX,L^\infty)$ lower semicontinuity (Proposition~\ref{reverse1}) and that if $\cX$ has order continuous norm and is not equal to $L^1$ then the strong Fatou property of a quasiconvex law-invariant functional $\rho$ is equivalent to both $\sigma(\cX,L^\infty)$ lower semicontinuity and the Fatou property (Proposition~\ref{oc-prop}).
In Section 3, we study the Fatou-type properties of inf-convolutions.  In general the (strong) Fatou property is not preserved by inf-convolution (see, e.g., \cite{D:05}). In \cite{FS:08}, it was proved that the Fatou property is preserved by inf-convolutions of convex, cash-additive, law-invariant functionals on $L^p$. In Proposition ~\ref{law-thm} we extend this result for the strong Fatou property on r.i.\ spaces. In Proposition ~\ref{si-thm}, we derive a similar result for inf-convolutions of convex functionals that are $S$-additive and surplus-invariant subject to positivity.

\section{Law-invariant Functionals}

\emph{Throughout this section we will assume that $\cX$  is an r.i.\ space over a fixed nonatomic probability space $(\Omega,\cF,\bP)$.}
We refer to the appendix for notation and facts on function spaces. Write $\pi$ to denote a
finite measurable partition of $\Omega$ whose members have non-zero
probabilities, and write $\Pi$ for the collection of all such $\pi$.
Denote by $\sigma(\pi)$ the finite $\sigma$-subalgebra
generated by $\pi$, and write $
\mathbb{E}[X|\pi] := \mathbb{E}[X|\sigma(\pi)]$.
For all $X\in\cX$ and $\pi\in\Pi$, we have $\E[X|\pi]\in L^\infty\subset \cX$ by \eqref{contain}, and moreover, by \cite[Theorem~4.8, p.61]{BS:88},
\begin{align}\label{con-e-c}\bignorm{\E[X|\pi]}\leq \norm{X}.\end{align}


Our main result asserts that the strong Fatou property of a quasiconvex law-invariant risk measure implies $\sigma(\cX,L^\infty)$ lower semicontinuity. For this purpose, we need to establish some preliminary technical results.
First of all, recall the
following useful result, which is contained in Step 2 in the proof of
\cite[Lemma~1.3]{S:10}.

\begin{lemma}[\cite{S:10}]\label{bounded}
Let $X\in L^\infty$, $\varepsilon>0$ and $\pi\in\Pi$. Then, there exist
$X_1,\dots,X_N\in L^\infty$ which have the same distribution as $X$ and satisfy
\[
\Bignorm{\frac{1}{N}\sum_{i=1}^NX_i-\mathbb{E}[X|\pi]}_\infty \leq \varepsilon.
\]
\end{lemma}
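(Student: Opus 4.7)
The strategy is to realize the $X_i$ as $X\circ\sigma_i$ for $N$ measure-preserving bijections $\sigma_i\colon\Omega\to\Omega$ that leave each atom of $\pi$ invariant. Each such $X_i$ then automatically has the same distribution as $X$, so the problem reduces to arranging the $\sigma_i$ so that $\tfrac{1}{N}\sum_{i=1}^N X_i$ approximates the piecewise constant function $\E[X|\pi]$ uniformly on each atom.

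Write $\pi=\{A_1,\ldots,A_k\}$ with $p_j:=\bP(A_j)>0$, set $M:=\|X\|_\infty$ and $c_j:=\tfrac{1}{p_j}\int_{A_j}X\,d\bP$, so that $\E[X|\pi]=\sum_j c_j\one_{A_j}$. For each $j$, nonatomicity of $(\Omega,\cF,\bP)$ would yield a measure-preserving bijection (mod null sets) $\phi_j\colon A_j\to[0,p_j)$ such that $f_j:=X\circ\phi_j^{-1}$ is non-increasing on $[0,p_j)$; this is just the concrete realization of the decreasing rearrangement of $X|_{A_j}$ on a nonatomic standard probability space. I would then choose $N\geq 2M/\varepsilon$ and define the rotation
\[
T_j\colon A_j\to A_j,\qquad T_j(\omega):=\phi_j^{-1}\bigl((\phi_j(\omega)+p_j/N)\bmod p_j\bigr),
\]
and set $\sigma_i|_{A_j}:=T_j^{\,i}$ for $i=1,\ldots,N$. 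Each $\sigma_i$ is then a measure-preserving bijection of $\Omega$, and $X_i:=X\circ\sigma_i\in L^\infty$ has the same distribution as $X$.

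For the quantitative bound, fix $\omega\in A_j$ and set $a:=\phi_j(\omega)\in[0,p_j)$. Under $\phi_j$ the orbit $\{T_j^{\,i}(\omega):i=1,\ldots,N\}$ corresponds to the $N$ equally spaced points $(a+ip_j/N)\bmod p_j$ in $[0,p_j)$, hence
\[
\frac{1}{N}\sum_{i=1}^N X_i(\omega) \;=\; \frac{1}{N}\sum_{i=1}^N f_j\bigl((a+ip_j/N)\bmod p_j\bigr)
\]
is a Riemann-type sample average of $f_j$. Since $f_j$ is monotone with values in $[-M,M]$, its total variation is at most $2M$, and the standard error estimate for Riemann sums of bounded monotone integrands against the mean $c_j=\tfrac{1}{p_j}\int_0^{p_j}f_j$ gives
\[
\Bigl|\tfrac{1}{N}\sum_{i=1}^N X_i(\omega) - c_j\Bigr| \;\leq\; \frac{2M}{N} \;\leq\; \varepsilon
\]
uniformly in $\omega\in A_j$. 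Since $\E[X|\pi]=c_j$ on $A_j$, this is exactly the required $L^\infty$-bound on all of $\Omega$.

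The main obstacle is making the approximation \emph{uniform in} $\omega$. A naive appeal to Birkhoff's ergodic theorem applied to an ergodic rotation on each atom would only yield a.e.\ convergence of $\tfrac{1}{N}\sum X(T_j^{\,i}\omega)$ to $c_j$, which is too weak for the $\|\cdot\|_\infty$ conclusion. The decisive maneuver is the monotone rearrangement: it converts $X|_{A_j}$ into a bounded-variation function on an interval, for which the Riemann-sum error at $N$ equally spaced nodes is $O(1/N)$ \emph{independently of the offset} $a=\phi_j(\omega)$.
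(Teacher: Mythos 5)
The paper does not prove this lemma at all: it is imported verbatim from Step~2 of the proof of Lemma~1.3 in \cite{S:10}, so your argument can only be compared with that cited proof. Your proof is correct, and it takes a genuinely different route. Svindland's argument first replaces $X$ by a simple function within $\varepsilon/2$ in $L^\infty$, subdivides each atom of $\pi$ into finitely many cells of equal measure on which the simple function is constant, and then averages over the cyclic permutations of these cells, which reproduces the conditional expectation of the simple function \emph{exactly}; the error comes only from the simple-function approximation. You instead keep $X$ intact, transport $X|_{A_j}$ to its decreasing rearrangement $f_j$ on an interval, and replace the exact combinatorial averaging by a quantitative Riemann-sum estimate for monotone integrands, which is what makes the bound $2M/N$ uniform in the offset $a=\phi_j(\omega)$ — correctly identified as the crux, since an ergodic-theorem argument would only give a.e.\ convergence. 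The computation is right: the orbit points are one per subinterval of length $p_j/N$, and for a monotone $f_j$ with values in $[-M,M]$ the discrepancy between such a sample average and $\tfrac{1}{p_j}\int_0^{p_j}f_j$ is at most $2M/N$. The one technical wrinkle is your use of $\phi_j^{-1}$ as a point map: on a merely nonatomic (not necessarily standard) probability space, as assumed in this paper, a measure-preserving \emph{bijection} mod null sets onto an interval need not exist. This is harmless, because bijectivity is never really used: it suffices to take a measure-preserving $\phi_j\colon A_j\to[0,p_j)$ with $X=f_j\circ\phi_j$ a.e.\ (obtainable by randomizing the distributional transform on the countably many level sets of positive measure, each of which is again nonatomic) and to \emph{define} $X_i:=f_j\circ R_i\circ\phi_j$ on $A_j$, where $R_i$ is the rotation by $ip_j/N$; equidistribution with $X$ and the pointwise estimate go through verbatim. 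Both approaches ultimately trade on the same nonatomicity, but yours avoids the simple-function reduction at the cost of invoking the rearrangement machinery.
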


A sequence $(X_n)\subset \cX$ is said to {\em order converge} to $X\in\cX$, written as $X_n\stackrel{o}{\longrightarrow}X$, if $X_n\xrightarrow{a.s.}X$ and
there exists $X_0\in \cX$ such that $\abs{X_n}\leq X_0$ for all $n\in \N$. We say that a subset $\cC\subset \cX$ is \emph{order closed} in $\cX$ if it contains all the order limits of sequences with terms in it. Clearly, a functional $\rho:\cX\rightarrow(-\infty,\infty]$ has the Fatou property if and only if the sublevel set $\{\rho\leq m\}$ is order closed for every $m\in\R$, and has the strong Fatou property if and only if each sublevel set $\{\rho\leq m\}$ contains the a.s.-limits of norm bounded sequences with terms in it. A set $\cC$ is \emph{law-invariant} if $Y\in \cC$ whenever $X\in\cC$ and $X,Y$ have the same distribution. It is also clear that a functional $\rho$ is law-invariant if and only if each sublevel set $\{\rho\leq  m\}$ is law-invariant.

\begin{proposition}\label{con-con-e}
Let $\cC$ be a convex, order closed, law-invariant set in $\cX$. Then,
$\mathbb{E}[X|\pi]\in \cC$ for any $X\in \cC$ and any $\pi\in\Pi$.
\end{proposition}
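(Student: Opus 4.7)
The plan is to prove the proposition first for bounded $X$ using Lemma~\ref{bounded}, then reduce the general case via a truncation-and-averaging argument. For $X \in \cC \cap L^\infty$, applying Lemma~\ref{bounded} with $\ep = 1/k$ produces $X_{k,1},\dots,X_{k,N_k} \in L^\infty$ that are equidistributed with $X$; these lie in $\cC$ by law-invariance, and so does the average $Y_k := \frac{1}{N_k}\sum_i X_{k,i}$ by convexity. Since $\bignorm{Y_k - \E[X|\pi]}_\infty \le 1/k$ we get $Y_k \xrightarrow{a.s.} \E[X|\pi]$, and the uniform bound $\abs{Y_k} \le \norm{X}_\infty \one_\Omega \in \cX$ (using $L^\infty \subseteq \cX$) upgrades this to $Y_k \xrightarrow{o} \E[X|\pi]$ in $\cX$. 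Order closedness of $\cC$ then yields $\E[X|\pi] \in \cC$.

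For a general $X \in \cC$, the obstacle is that the natural truncations $X^{(n)} := (-n)\vee X \wedge n \in L^\infty$ need not belong to $\cC$, so the bounded case does not apply directly to them. To circumvent this, I would realise the equidistributed copies supplied by Lemma~\ref{bounded} as pull-backs by measure-preserving transformations of the nonatomic space $(\Omega,\cF,\bP)$. Applying Lemma~\ref{bounded} to $X^{(n)}$ with $\ep = 1/n$ yields transformations $\sigma_{n,1},\dots,\sigma_{n,N_n}$ with $\bignorm{\frac{1}{N_n}\sum_i X^{(n)}\circ\sigma_{n,i} - \E[X^{(n)}|\pi]}_\infty \le 1/n$, and the corresponding averages $Z_n := \frac{1}{N_n}\sum_i X\circ\sigma_{n,i}$ are convex combinations of equidistributed copies of $X$, hence belong to $\cC$. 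Splitting $Z_n$ into a main piece $\frac{1}{N_n}\sum_i X^{(n)}\circ\sigma_{n,i}$, which is $L^\infty$-close to $\E[X|\pi]$ (since $\E[X^{(n)}|\pi]\to\E[X|\pi]$ uniformly on the finite partition $\pi$ by dominated convergence on each atom), and a tail piece $\frac{1}{N_n}\sum_i (X-X^{(n)})\circ\sigma_{n,i}$, which is an average of equidistributed copies of the vanishing tail $X-X^{(n)}$ and is controlled via the r.i.\ identity $\norm{Y\circ\sigma}_\cX = \norm{Y}_\cX$, one aims to conclude $Z_n\xrightarrow{o} \E[X|\pi]$ in $\cX$ and invoke order closedness.

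The hard part is precisely upgrading this mixed $L^\infty$/$\cX$-norm information on $Z_n$ to \emph{order} convergence in $\cX$, which is what order closedness of $\cC$ requires: one needs a common dominating function $W\in\cX$ with $\abs{Z_{n_k}}\le W$ along a subsequence. Since an average of equidistributed copies of $X$ is not in general order-dominated by $\abs{X}$ (and the tail piece only converges in norm when $\cX$ has order continuous norm, which is not assumed), a naive Fatou argument is unavailable. I expect this step to require either a diagonal extraction combined with a secondary truncation near the bounded target $\E[X|\pi]$ to manufacture the needed domination, or a Komlós-type subsequence argument producing Cesaro averages that converge a.s.\ to $\E[X|\pi]$ and admit a uniform order bound in $\cX$.
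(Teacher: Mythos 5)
Your treatment of the bounded case is correct and matches the paper's strategy there. But the reduction to general $X$ has a genuine gap, which you yourself flag: after splitting $Z_n$ into a main piece and the tail piece $\frac{1}{N_n}\sum_i (X-X^{(n)})\circ\sigma_{n,i}$, you have no way to order-dominate the tail along a subsequence, and its $\cX$-norm $\norm{X-X^{(n)}}$ need not tend to zero unless $\cX$ has order continuous norm, which is not assumed. Neither of your two rescue devices closes this: a Koml\'os extraction only produces a.s.\ convergence of Ces\`aro means to \emph{some} limit, and you would still need both to identify that limit with $\E[X|\pi]$ and to exhibit a dominating function in $\cX$; a secondary truncation near $\E[X|\pi]$ would produce random variables that are no longer equidistributed with $X$, hence no longer known to lie in $\cC$. (A smaller issue: Lemma~\ref{bounded} as stated only supplies equidistributed copies, not copies of the form $X\circ\sigma_i$, so realising them as pull-backs by measure-preserving transformations already requires reopening Svindland's construction.)

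The idea you are missing is to avoid ever composing the unbounded tail with measure-preserving maps. The paper works on the restricted probability space $(A_n,\cF_{|A_n},\bP_{|A_n})$ with $A_n=\{\abs{X}\leq n\}$, applies Lemma~\ref{bounded} to $X_{|A_n}$ and the trace partition $\{B_i\cap A_n\}$ to obtain copies $X'_{n,j}$ supported on $A_n$, and then glues: $X_{n,j}:=X'_{n,j}+X\one_{A_n^c}$. Each $X_{n,j}$ is equidistributed with $X$ itself, not with a truncation, hence lies in $\cC$ by law-invariance, and the discrepancy between the average of the $X_{n,j}$ and the average of the $X'_{n,j}$ is exactly $\abs{X}\one_{A_n^c}$, which is dominated by $\abs{X}\in\cX$ and tends to $0$ a.s. The remaining error terms (comparing $\sum_i\E[X|B_i\cap A_n]\one_{B_i\cap A_n}$ with $\E[X|\pi]$) are dominated by multiples of $\one$ and of $\one_{A_n^c}$, so the full sequence order-converges to $\E[X|\pi]$ and order closedness of $\cC$ applies directly. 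This gluing is precisely what manufactures the domination your proposal cannot produce.
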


\begin{proof}
Let $X\in \cC$, $\pi=\{B_1,\dots,B_k\}\in\Pi$ and fix $n\in\N$. Set $A_n=\{\abs{X}\leq n\}$.
Consider the nonatomic probability space $(A_n,\cF_{|A_n},\bP_{|A_n})$, where $\cF_{|A_n}:=\{B\in\cF : B\subset A_n\}$ and
$\bP_{|A_n}:\cF_{|A_n}\to[0,1]$ is defined by $\bP_{|A_n}(B):=\bP(B|A_n)$. Applying Lemma~\ref{bounded} to $X_{|A_n}$ and the partition $\{B_1\cap A_n,\dots,B_k\cap A_n\}$ of the state space $A_n$, we obtain $X_{n,1}',\dots,X_{n,N_n}'\in L^\infty(A_n,\cF_{|A_n},\bP_{|A_n})$ such that $X'_{n,j}$ has the same distribution as $X_{|A_n}$ for all $1\leq j\leq N_n$ and
\[
\Bigabs{\sum_{i=1}^k\E_{|A_n}\big[X_{|A_n}|B_i\cap A_n\big]\one_{B_i\cap
A_n}-\frac{1}{N_n}\sum_{j=1}^{N_n}X_{n,j}'}\leq \frac{1}{n}\quad\quad \mbox{$\bP_{|A_n}$-a.s.~on $A_n$},
\]
where $\E_{|A_n}$ denotes the expectation under $\bP_{|A_n}$. A direct computation shows that $\E_{|A_n}\big[X_{|A_n}|B_i\cap A_n\big]=\E[X|B_i\cap A_n]$ for all $1\leq i\leq k$. Set $X_{n,j}'=0$ on $A_n^c$. Then
\begin{align}\label{int1}
\Bigabs{\sum_{i=1}^k\E[X|B_i\cap A_n]\one_{B_i\cap A_n}-\frac{1}{N_n}\sum_{j=1}^{N_n}X_{n,j}'}\leq \frac{1}{n}\one_{A_n}\leq \frac{1}{n}\one\stackrel{o}{\longrightarrow}0\;\mbox{ in }\cX.
\end{align}
Set $\delta=\frac{1}{2}\min_{1\leq i\leq k}\bP(B_i)$. Since $A_n\uparrow\Omega$, there exists $n_0\in\N$ such that $\bP(B_i\cap A_n)\geq \delta$ for all $1\leq i\leq k$ and $n\geq n_0$.
Thus, for all $1\leq i\leq k$ and $n\geq n_0$,
\begin{align*}
&\,\Bigabs{\E[X|B_i\cap A_n]\one_{B_i\cap A_n}-\E[X|B_i]\one_{B_i}}\\
\leq&\,\Bigabs{(\E[X|B_i\cap A_n]-\E[X|B_i])\one_{B_i\cap A_n}} + \Bigabs{\E[X|B_i](\one_{B_i}-\one_{B_i\cap A_n})}\\
\leq&\,
\frac{\bigabs{\E[X\one_{B_i\cap A_n}]\bP(B_i)-\E[X\one_{B_i}]\bP(B_i\cap A_n)}}{\bP(B_i\cap A_n)\bP(B_i))}\,\one+ \frac{\E[\abs{X}]}{2\delta}\one_{B_i\cap A_n^c}\\
\leq&\,
\frac{\bigabs{(\E[X\one_{B_i\cap A_n}]-\E[X\one_{B_i}])\bP(B_i)+\E[X\one_{B_i}](\bP(B_i)-\bP(B_i\cap A_n))}}{2\delta^2}\,\one+ \frac{\E[\abs{X}]}{2\delta}\one_{B_i\cap A_n^c}\\
\leq&\,
\frac{\E[\abs{X}\one_{B_i\cap A_n^c}]+\E[\abs{X}]\bP(B_i\cap A_n^c)}{2\delta^2}\,\one+ \frac{\E[\abs{X}]}{2\delta}\one_{B_i\cap A_n^c}.
\end{align*}
Therefore, for $n\geq n_0$,
\begin{align*}
\Bigabs{\sum_{i=1}^k\E[X|B_i\cap A_n]\one_{B_i\cap A_n}-\E[X|\pi]}\leq
\frac{\E[\abs{X}\one_{A_n^c}]+\E[\abs{X}]\bP(A_n^c)}{2\delta^2}\,\one+ \frac{\E[\abs{X}]}{2\delta}\one_{A_n^c}.
\end{align*}
Since $\E[\abs{X}\one_{A_n^c}]\longrightarrow0$, $\bP(A_n^c)\longrightarrow0$ and $\one_{A_n^c}\stackrel{o}{\longrightarrow}0$, we have
\begin{align}\label{int2}
\Bigabs{\sum_{i=1}^k\E[X|B_i\cap A_n]\one_{B_i\cap A_n}-\E[X|\pi]}\stackrel{o}{\longrightarrow}0\;\mbox{ in }\cX.
\end{align}
Set $X_{n,j}=X_{n,j}'+X\one_{A_n^c}$ for $1\leq j\leq N_n$. Then, $X_{n,j}$ has the same distribution as
$X$ and, hence, $X_{n,j}\in \cC$ by law-invariance. Thus $\frac{1}{N_n}\sum_{j=1}^{N_n}X_{n,j}\in \cC$ by convexity of $\cC$. Note that
\begin{align}\label{int3}
\Bigabs{\frac{1}{N_n}\sum_{j=1}^{N_n}X_{n,j}-\frac{1}{N_n}\sum_{j=1}^{N_n}X_{n,j}'}=\abs{X}\one_{A_n^c}\stackrel{o}{\longrightarrow}0\;\mbox{ in }\cX.
\end{align}
Combining \eqref{int1}-\eqref{int3}, we have
$$\Bigabs{\frac{1}{N_n}\sum_{j=1}^{N_n}X_{n,j}-\E[X|\pi]}\stackrel{o}{\longrightarrow}0\;\mbox{ in }\cX.$$
This concludes the proof because $\cC$ is order closed.
\end{proof}

The next preliminary results deal with convergence of conditional expectations. They are both well-known to experts. For the convenience of the reader, we provide a proof of Proposition \ref{norm-exp}.

\begin{lemma}\label{inf-exp}
For any $X\in L^\infty$ and $\ep>0$, there exists $\pi\in\Pi$ such that $\norm{\mathbb{E}[X|\pi]-X}_\infty<\ep$.
\end{lemma}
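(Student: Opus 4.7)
The plan is to approximate $X$ in sup-norm by a simple function built from level sets of $X$, and observe that the corresponding conditional expectation automatically inherits this approximation.

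Concretely, set $M=\norm{X}_\infty$ and choose a partition $-M=t_0<t_1<\dots<t_k=M$ of the interval $[-M,M]$ with $t_i-t_{i-1}<\ep$ for every $i$. Let $B_i=X^{-1}\bigl([t_{i-1},t_i)\bigr)$ for $i<k$ and $B_k=X^{-1}\bigl([t_{k-1},t_k]\bigr)$. After discarding those $B_i$ with $\bP(B_i)=0$ and re-indexing, the remaining sets form a partition $\pi=\{B_1,\dots,B_m\}\in\Pi$ of $\Omega$. By construction, on each $B_i$ the random variable $X$ takes values in an interval of length less than $\ep$, so for a.e.\ $\omega\in B_i$ both $X(\omega)$ and the constant $\E[X|B_i]=\E[X\one_{B_i}]/\bP(B_i)$ belong to that same interval. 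Hence
\[
\bigabs{\E[X|\pi](\omega)-X(\omega)}<\ep\quad \mbox{for a.e.\ }\omega\in\Omega,
\]
and taking the essential supremum gives $\norm{\E[X|\pi]-X}_\infty\leq\ep$. A trivial rescaling (using $\ep/2$ in place of $\ep$) yields the strict inequality.

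There is no genuine obstacle here; the only thing to be a little careful about is the requirement in the definition of $\Pi$ that all partition members have positive probability, which is handled by simply dropping the null sets from the preimage partition.
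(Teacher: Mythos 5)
Your argument is correct: partitioning the range $[-M,M]$ into intervals of length less than $\ep$, pulling back to a finite measurable partition, and noting that on each cell both $X$ and its conditional average lie in the same short interval is exactly the standard proof, and you handle the only delicate point (discarding null cells so that $\pi\in\Pi$) appropriately. The paper itself states this lemma without proof, describing it as well known, so there is no alternative argument to compare against; yours is the one the authors implicitly have in mind.
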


\begin{proposition}\label{norm-exp}Let $X\in\cX$. The following hold.
\begin{enumerate}
\item\label{norm-exp1}  There exists a sequence $(\pi_n)\subset \Pi$ such that
$\E[X|\pi_n]\stackrel{a.s.}{\longrightarrow}X$.
\item\label{norm-exp2} Suppose that $\cX$ has order continuous norm. There exists a sequence $(\pi_n)\subset\Pi$ such that
$\norm{\E[X|\pi_n]-X}\rightarrow0$ and $\E[X|\pi_n]\stackrel{a.s.}{\longrightarrow}X$.
\end{enumerate}
\end{proposition}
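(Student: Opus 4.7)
The plan is to prove the two parts by different routes: part (1) via a direct martingale-convergence argument, and part (2) via an $L^\infty$-approximation argument that uses order continuity together with Lemma~\ref{inf-exp}.

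For part (1), I would take the countably generated $\sigma$-algebra $\cG := \sigma(X)$ (generated for instance by $\{X \leq q\}_{q \in \Q}$) and pick an increasing sequence $(\cG_n)$ of finite sub-$\sigma$-algebras with $\sigma(\bigcup_n \cG_n) = \cG$. Letting $\pi_n$ be the collection of positive-probability atoms of $\cG_n$, one has $\pi_n \in \Pi$ and $\E[X|\pi_n] = \E[X|\cG_n]$. Since $\cX$ is an r.i.\ space on a probability space, $\cX \hookrightarrow L^1$, so $X \in L^1$ and Doob's upward martingale convergence theorem delivers $\E[X|\cG_n] \to \E[X|\cG] = X$ almost surely.

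For part (2), I would exploit that order continuity makes $L^\infty$ dense in $(\cX, \norm{\cdot})$: the truncations $X\one_{\{|X|\leq k\}}$ converge to $X$ in $\cX$-norm because $|X|\one_{\{|X|>k\}}\downarrow 0$ a.s.\ and the norm is order continuous. Set $c := 2 + \norm{\one}$. For each $n \in \N$, I would first choose some $X' \in L^\infty$ with $\norm{X - X'} < c^{-1} 2^{-n}$, then invoke Lemma~\ref{inf-exp} on $X'$ to produce $\pi_n \in \Pi$ satisfying $\norm{\E[X'|\pi_n] - X'}_\infty < c^{-1} 2^{-n}$; the embedding $L^\infty \hookrightarrow \cX$ (with constant $\norm{\one}$) then gives $\norm{\E[X'|\pi_n] - X'} \leq \norm{\one}\, c^{-1} 2^{-n}$. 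Combining these bounds via a triangle inequality, and using the contractivity~\eqref{con-e-c} on the remainder $\E[X-X'|\pi_n]$, yields $\norm{\E[X|\pi_n] - X} \leq c \cdot c^{-1} 2^{-n} = 2^{-n}$. This already gives norm convergence, and since $\cX \hookrightarrow L^1$ the resulting summability $\sum_n \norm{\E[X|\pi_n] - X} < \infty$ forces $\sum_n |\E[X|\pi_n] - X| < \infty$ a.s., whence $\E[X|\pi_n] \to X$ almost surely as well.

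The one nontrivial point is extracting both modes of convergence from the same sequence in part (2): a priori the martingale construction of part (1) gives a.s.\ convergence but not norm convergence, while a generic $L^\infty$-approximation gives norm convergence without a.s.\ information. One could try to dominate $\sup_n |\E[X|\pi_n]|$ in $\cX$ via a Doob-type maximal inequality, but establishing this in an arbitrary order-continuous r.i.\ space would require nontrivial Boyd index hypotheses that are not assumed here. The summable-rate device above sidesteps the maximal-inequality issue entirely and is the main reason for choosing the approximation threshold to be $2^{-n}$ rather than merely $1/n$.
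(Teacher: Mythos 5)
Your proposal is correct, but it reaches the two conclusions by a partly different route than the paper. For part \eqref{norm-exp2} the two arguments are essentially the same in substance: the paper also approximates $X$ by a bounded function (via the truncations $X\one_{\{\abs{X}\leq m_n\}}$, whose tails vanish in norm by order continuity), applies Lemma~\ref{inf-exp} to the bounded part, and controls the remainder with the contractivity \eqref{con-e-c}; the only real difference is that the paper gets rates $3/n$ and then extracts an a.s.-convergent subsequence from the induced $L^1$-convergence, whereas you enforce summable rates $2^{-n}$ so that $\sum_n\abs{\E[X|\pi_n]-X}<\infty$ a.s.\ and no subsequence is needed. Both devices are standard and equally valid. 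For part \eqref{norm-exp1} you genuinely diverge: the paper simply observes $\cX\subset L^1$ and applies part \eqref{norm-exp2} to the order-continuous space $L^1$, while you give a direct argument via Doob's upward martingale convergence along finite $\sigma$-algebras generating $\sigma(X)$. Your route is self-contained and arguably cleaner (it avoids routing through the $L^1$ case of part \eqref{norm-exp2}); the paper's is shorter given that part \eqref{norm-exp2} is already proved. One small point to tidy in your part \eqref{norm-exp1}: the finite $\sigma$-algebra $\cG_n$ may have null atoms, and $\Pi$ consists of partitions of $\Omega$ whose members all have positive probability, so you should merge any null atoms into a positive-probability one; this does not change $\E[X|\pi_n]$ almost surely, so the martingale identification $\E[X|\pi_n]=\E[X|\cG_n]$ survives.
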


\begin{proof}
Assume first that $\cX$ has order continuous norm. Since $X\one_{\{\abs{X}>n\}} \stackrel{o}{\longrightarrow}0$, it follows that $\norm{X\one_{\{\abs{X}>n\}}}\longrightarrow0$.
Thus, for any $n\in\N$, there exists $m_n\in\N$ such that $$\bignorm{X\one_{\{\abs{X}>m_n\}}}\leq \frac{1}{n}.$$
Since $L^\infty$ continuously embeds  into $\cX$ (see \eqref{iso-contain}), by applying Lemma~\ref{inf-exp}, we get $\pi_n\in\Pi$ such that $$\bignorm{\E[X\one_{\{\abs{X}\leq m_n\}}|\pi_n]-X\one_{\{\abs{X}\leq m_n\}}}\leq\frac{1}{n}.$$
Note also that by \eqref{con-e-c}$$\bignorm{\E\big[X\one_{\{\abs{X}>m_n\}}|\pi_n\big]}\leq \norm{X\one_{\{\abs{X}>m_n\}}}\leq \frac{1}{n}.$$
Therefore, it follows that
\begin{align*}
\bignorm{\E[X|\pi_n]-X}=&\,\Bignorm{\E\big[X\one_{\{\abs{X}>m_n\}}|\pi_n\big]+\E[X\one_{\{\abs{X}\leq m_n\}}|\pi_n]-X\one_{\{\abs{X}\leq m_n\}}-X\one_{\{\abs{X}>m_n\}}}\\
\leq&\, \frac{3}{n}\longrightarrow0.
\end{align*}
Since $\cX$ continuously embeds into $L^1$ (see \eqref{iso-contain}), $\norm{\E[X|\pi_n]-X}_1\longrightarrow0$. For a subsequence $(\pi_{n_k})$, we have $\E[X|\pi_{n_k}]\stackrel{a.s.}{\longrightarrow}X$. Replacing $(\pi_n)$ with $(\pi_{n_k})$, this proves \eqref{norm-exp2}. \eqref{norm-exp1} follows by noting again that $\cX\subset L^1$ and applying \eqref{norm-exp2} to $L^1$.
\end{proof}

Propositions \ref{con-con-e} and \ref{norm-exp} imply the following interesting result, which asserts that quasiconvex, law-invariant functionals may be ``localized'' on $L^\infty$.

\begin{corollary}\label{localize}
Let $\rho_1,\rho_2:\mathcal{X} \rightarrow (-\infty,\infty]$ be two quasiconvex, law invariant functionals each of which either has  the strong Fatou property or is $\sigma(\cX,L^\infty)$ lower semicontinuous.
If $\rho_1$ and $\rho_2$ coincide on $L^\infty$, then $\rho_1=\rho_2$. 

\end{corollary}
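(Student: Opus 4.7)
The plan is to reduce the equality $\rho_1(X)=\rho_2(X)$ for a general $X\in\cX$ to the assumed equality on $L^\infty$ by approximating $X$ with the conditional expectations $\E[X|\pi_n]\in L^\infty$, which will already lie in each relevant sublevel set of both functionals thanks to Proposition~\ref{con-con-e}.

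First I would show that under either hypothesis the sublevel sets $\{\rho_i\leq m\}$ are convex, law-invariant, and order closed. Convexity and law-invariance are immediate; order closedness is built into the strong Fatou property, while for the $\sigma(\cX,L^\infty)$ lower semicontinuous case one uses $\cX\hookrightarrow L^1$ from \eqref{iso-contain}: if $X_n\stackrel{o}{\longrightarrow}X$ in $\cX$, dominated convergence gives $\int X_nY\,d\bP\to\int XY\,d\bP$ for every $Y\in L^\infty$, so order convergent sequences are $\sigma(\cX,L^\infty)$ convergent and any $\sigma(\cX,L^\infty)$-closed set is order closed. Hence Proposition~\ref{con-con-e} applies to the sublevel sets of both $\rho_1$ and $\rho_2$.

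Next, fix $X\in\cX$ and suppose $\rho_1(X)=m<\infty$. Proposition~\ref{con-con-e} yields $\rho_1(\E[X|\pi])\leq m$ for every $\pi\in\Pi$; since $\E[X|\pi]\in L^\infty$ and $\rho_1=\rho_2$ on $L^\infty$, also $\rho_2(\E[X|\pi])\leq m$. By Proposition~\ref{norm-exp}\eqref{norm-exp1} I would pick $(\pi_n)\subset\Pi$ with $\E[X|\pi_n]\stackrel{a.s.}{\longrightarrow}X$, and by \eqref{con-e-c} this sequence is norm bounded by $\norm{X}$. If $\rho_2$ has the strong Fatou property, then directly $\rho_2(X)\leq \liminf_n\rho_2(\E[X|\pi_n])\leq m$. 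If instead $\rho_2$ is only $\sigma(\cX,L^\infty)$ lower semicontinuous, I would upgrade the a.s.\ convergence to $\sigma(\cX,L^\infty)$ convergence by invoking the classical fact that the family $\{\E[X|\cG] : \cG\subset\cF\ \text{sub-}\sigma\text{-algebra}\}$ is uniformly integrable (de la Vall\'ee Poussin, using $X\in L^1$ via \eqref{iso-contain}); combined with a.s.\ convergence this gives $\E[X|\pi_n]\to X$ in $L^1$, which tests correctly against every $Y\in L^\infty$, so again $\rho_2(X)\leq m$. Swapping the roles of $\rho_1$ and $\rho_2$ delivers the reverse inequality, and the case $\rho_1(X)=\infty$ is handled by the same symmetric argument (if $\rho_2(X)<\infty$ then symmetry would force $\rho_1(X)\leq \rho_2(X)<\infty$).

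The only mildly delicate point in this plan is verifying that $\sigma(\cX,L^\infty)$ lower semicontinuity entails the Fatou property, since that is what makes Proposition~\ref{con-con-e} available in all four combinations of hypotheses on $\rho_1$ and $\rho_2$; once that observation is in place, the remainder is a clean application of Propositions~\ref{con-con-e} and~\ref{norm-exp}.
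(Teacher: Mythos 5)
Your proposal is correct and follows essentially the same route as the paper: approximate $X$ by $\E[X|\pi_n]\in L^\infty$, use Proposition~\ref{con-con-e} on the (convex, law-invariant, order closed) sublevel sets together with the observation that $\sigma(\cX,L^\infty)$ lower semicontinuity implies order closedness, and then pass to the limit via the relevant semicontinuity. The paper arranges this as the two-sided limit identity $\rho_i(X)=\lim_n\rho_i(\E[X|\pi_n])$ (obtaining the $\sigma(\cX,L^\infty)$ convergence from the $L^1$-norm convergence in Proposition~\ref{norm-exp}\eqref{norm-exp2} rather than your uniform-integrability argument), but this is only a cosmetic difference.
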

\begin{proof}
Fix any $X\in \cX$. By Proposition~\ref{norm-exp} applied to $L^1$, we can find  a sequence
$(\pi_n)\subset\Pi$ such that
$\mathbb{E}[X|\pi_n]\longrightarrow X$ \emph{both in $L^1$-norm and almost surely}.
Then, clearly, $\mathbb{E}[X|\pi_n]\xrightarrow{\sigma(\cX,L^\infty)}X$. Thus if $\rho_1$ is $\sigma(\cX,L^\infty)$ lower semicontinuous, we have
\[
\rho_1(X) \leq \liminf_{n\to\infty} \rho_1(\mathbb{E}[X|\pi_n]).
\]
Alternatively, if $\rho_1$ has the strong Fatou property, then $\sup_n\bignorm{\E[X|\pi_n]}\leq \norm{X}$ again implies
\[
\rho_1(X) \leq \liminf_{n\to\infty} \rho_1(\mathbb{E}[X|\pi_n]).
\]
On the other hand, the set $\cC=\big\{Y\in \cX:\rho_1(Y)\leq \rho_1(X)\big\}$ is convex,
law-invariant, and clearly contains $X$. If $\rho_1$ has the strong Fatou property, and therefore, the Fatou property, then $\cC$ is order closed. If $\rho_1$ is $\sigma(\cX,L^\infty)$ lower semicontinuous, then $\cC$ is $\sigma(\cX,L^\infty)$-closed, and is thus also order closed, since order convergence in $\cX$ implies order convergence in $L^1$ (thanks to $\cX\subset L^1$), which in turn implies $\sigma(\cX,L^\infty)$ convergence. Hence, by
Proposition~\ref{con-con-e}, we have $\mathbb{E}[X|\pi_n]\in \cC$ for every
$n\in\N$, so that
\[
\limsup_{n\to\infty}\rho_1(\mathbb{E}[X|\pi_n]) \leq \rho_1(X).
\]
It follows that
\begin{align}
\label{rep-loc}\rho_1(X)=\lim_n\rho_1(\mathbb{E}[X|\pi_n]).
\end{align}
The same conclusion holds for $\rho_2$ as well. Since $\mathbb{E}[X|\pi_n]\in
L^\infty$ for every $n\in\N$ and $\rho_1$ and $\rho_2$ coincide on $L^\infty$, we conclude that
$\rho_1(X)=\rho_2(X)$.
\end{proof}

We are now ready to present our main result.

\begin{theorem}\label{s-Fatou}
Let $\rho:\mathcal{X} \rightarrow (-\infty,\infty]$ a quasiconvex, law-invariant functional that has the strong Fatou property. Then $\rho$ is $\sigma(\cX,L^\infty)$ lower semicontinuous.
If $\rho$ is additionally convex, then it extends uniquely to a convex, law-invariant functional on $L^1$ with the Fatou property. The extension preserves also cash additivity.
\end{theorem}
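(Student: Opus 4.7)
The plan is to use conditional expectations against finite partitions $\pi\in\Pi$ as a bridge between $L^\infty$ and $\cX$ (and $L^1$), exploiting the localization principle already behind Corollary~\ref{localize}. As a preliminary observation, $\rho|_{L^\infty}$ is quasiconvex with the Fatou property on $L^\infty$: the embedding $L^\infty\hookrightarrow\cX$ in~\eqref{iso-contain} is continuous, so a $\|\cdot\|_\infty$-bounded sequence is $\|\cdot\|_\cX$-bounded and the strong Fatou property of $\rho$ restricts. By the result of Delbaen recalled in the introduction, $\rho|_{L^\infty}$ is then $\sigma(L^\infty,L^1)$-lsc and in particular $\|\cdot\|_\infty$-lsc.

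For the first assertion I would define
\[
\rho^*(X):=\sup_{\pi\in\Pi}\rho\bigl(\E[X|\pi]\bigr),\qquad X\in\cX,
\]
and prove that $\rho^*$ is $\sigma(\cX,L^\infty)$-lsc and that $\rho^*=\rho$. For $\sigma(\cX,L^\infty)$-lsc: for each fixed $\pi$, the map $X\mapsto\E[X|\pi]$ lands in the finite-dimensional space $L^\infty(\sigma(\pi))$, and testing $\sigma(\cX,L^\infty)$-convergence against the indicators of atoms of $\pi$ produces $\|\cdot\|_\infty$-convergence there; composition with the norm-lsc $\rho|_{L^\infty}$ is thus $\sigma(\cX,L^\infty)$-lsc, and the supremum of lsc functions is lsc. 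For $\rho^*=\rho$: the inequality $\rho^*\le\rho$ follows from Proposition~\ref{con-con-e} applied to $\cC=\{Y\in\cX:\rho(Y)\le\rho(X)\}$, which is convex, law-invariant and order-closed by the Fatou property; the reverse inequality uses Proposition~\ref{norm-exp}\eqref{norm-exp1} to produce $\pi_n$ with $\E[X|\pi_n]\xrightarrow{a.s.}X$, together with $\|\E[X|\pi_n]\|\le\|X\|$ from~\eqref{con-e-c} and the strong Fatou property.

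For the extension, set $\bar\rho(X):=\sup_{\pi\in\Pi}\rho(\E[X|\pi])$ for $X\in L^1$; this is well defined since $\E[X|\pi]\in L^\infty\subset\cX$, and agrees with $\rho$ on $\cX$ by the first assertion. Convexity and cash-additivity are inherited from linearity of $\E[\cdot|\pi]$ and the identity $\E[X+m\one|\pi]=\E[X|\pi]+m\one$; law-invariance follows from a standard symmetrization on the nonatomic probability space, producing for $X\sim Y$ a measure-preserving bijection $T$ with $Y=X\circ T$ so that $\E[Y|\pi]\sim\E[X|T(\pi)]$. For the Fatou property on $L^1$: if $X_n\stackrel{o}{\longrightarrow}X$ in $L^1$, then dominated convergence on the finitely many atoms of $\pi$ gives $\E[X_n|\pi]\to\E[X|\pi]$ in $\|\cdot\|_\infty$, so combining norm-lsc of $\rho|_{L^\infty}$ with taking $\sup_\pi$ yields $\bar\rho(X)\le\liminf_n\bar\rho(X_n)$. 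For uniqueness: any convex, law-invariant, Fatou extension $\bar\rho'$ on $L^1$ is norm-lsc on $L^1$ (which has order continuous norm, so Fatou = norm-lsc) and hence $\sigma(L^1,L^\infty)$-lsc by Hahn--Banach; Corollary~\ref{localize} applied with $\cX$ replaced by $L^1$ then forces $\bar\rho'=\bar\rho$, as both agree with $\rho$ on $L^\infty$.

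The principal subtlety I anticipate is the rigorous verification of law-invariance of $\bar\rho$ via a measure-preserving bijection on the nonatomic probability space; the remaining steps reduce to routine applications of Propositions~\ref{con-con-e} and~\ref{norm-exp} together with the finite-dimensionality of $L^\infty(\sigma(\pi))$.
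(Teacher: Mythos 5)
Your proof of the first assertion is correct and is in substance the paper's own argument in a different packaging: you write $\rho=\sup_{\pi\in\Pi}\rho(\E[\cdot\,|\pi])$ and observe that each $\rho\circ\E[\cdot\,|\pi]$ is $\sigma(\cX,L^\infty)$ lower semicontinuous, whereas the paper shows directly that each sublevel set $\{\rho\le m\}$ is $\sigma(\cX,L^\infty)$-closed; both versions rest on exactly the same two pillars, namely $\E[X|\pi]\in\{\rho\le\rho(X)\}$ via Proposition~\ref{con-con-e} and $\E[X|\pi_n]\xrightarrow{a.s.}X$ with $\sup_n\norm{\E[X|\pi_n]}\le\norm{X}$ via Proposition~\ref{norm-exp} and \eqref{con-e-c}, closed off by the strong Fatou property. (A small remark: you do not need Delbaen's theorem to get $\norm{\cdot}_\infty$ lower semicontinuity of $\rho|_{L^\infty}$; it follows immediately from the Fatou property on $L^\infty$, since $\norm{\cdot}_\infty$-convergence implies bounded a.s.\ convergence. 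You should also note that $\rho|_{L^\infty}$ is proper, which follows from $\rho(\E[X|\pi])\le\rho(X)$ for any $X$ with $\rho(X)<\infty$.)

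For the extension to $L^1$, however, there is a genuine gap. The paper does not construct the extension by hand: it invokes \cite[Theorem~2.2]{FS:12} to obtain the unique convex, law-invariant, norm lower semicontinuous extension of $\rho|_{L^\infty}$ to $L^1$, and then uses Corollary~\ref{localize} only to check that this extension restricts to $\rho$ on $\cX$. You instead define $\bar\rho(X)=\sup_\pi\rho(\E[X|\pi])$ on $L^1$ and must verify its law-invariance, and the mechanism you propose fails: for $X$ and $Y$ with the same distribution on a nonatomic probability space there need not exist a measure-preserving \emph{bijection} $T$ with $Y=X\circ T$ (take $X$ injective and $Y$ two-to-one with the same law, or let $Y$ generate the full $\sigma$-algebra while $X$ does not); and even for a non-invertible measure-preserving $T$, the image $T(\pi)$ is not a partition in $\Pi$ and the asserted identification $\E[Y|\pi]\sim\E[X|T(\pi)]$ is not available. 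Proving that $\sup_\pi\rho(\E[X|\pi])$ depends only on the law of $X$ is precisely the nontrivial content of \cite[Theorem~2.2]{FS:12} (it proceeds via dilatation/convex-order monotonicity of law-invariant convex functionals, ultimately resting on Lemma~\ref{bounded}-type approximations), so your construction as written does not close; the step should either be replaced by the citation, as in the paper, or by a genuine proof of convex-order monotonicity. Your uniqueness argument via Corollary~\ref{localize} on $L^1$ is fine once existence of a law-invariant extension is secured.
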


\begin{proof}
Pick any $m\in\R$ and put $\cC=\{\rho\leq m\}$. Then $\cC$ is order  closed. We show that $\cC$ is $\sigma(\cX,L^\infty)$-closed. Take any net $(X_\alpha)\subset \cC$ and $X\in \cX$ such that $X_\alpha\xrightarrow{\sigma(\cX,L^\infty)}X$. Then $\E[X_\alpha\one_B]\longrightarrow\E[X\one_B]$ for any $B\in\cF$. Consequently, for any $\pi=\{B_1,\dots,B_k\}\in\Pi$,
$$\E[X_\alpha|\pi] = \sum_{i=1}^k\frac{\E\big[X_\alpha\one_{B_i}\big]}{\bP(B_i)}\one_{B_i}\longrightarrow  \sum_{i=1}^k\frac{\E\big[X\one_{B_i}\big]}{\bP(B_i)}\one_{B_i}
=\E[X|\pi],$$
in the $L^\infty$-norm. We can thus take countably many $(\alpha_n)$ such that
$$\bigabs{\E[X_{\alpha_n}|\pi]-\E[X|\pi]}\leq \frac{1}{n}\one\stackrel{o}{\longrightarrow}0\mbox{ in }\cX.$$
Since $\E[X_\alpha|\pi]\in\cC$ for all $\alpha$ by Proposition~\ref{con-con-e}, order closedness of $\cC$ implies that $\mathbb{E}[X|\pi]\in \cC$. It follows that $\rho(\E[X|\pi])\leq m$ for any $\pi\in\Pi$.
Now by Proposition~\ref{norm-exp}, we can take $(\pi_n)\subset \Pi$ such that $\E[X|\pi_n]\xrightarrow{a.s.}X$. Since $\sup_n\bignorm{\E[X|\pi_n]}\leq \norm{X}<\infty$, the strong Fatou property of $\rho$ implies  that $\rho(X)\leq\liminf_n\rho(\E[X|\pi_n])\leq m$, so that $X\in \cC$. This proves that $\cC$ is $\sigma(\cX,L^\infty)$-closed. Since $m\in\mathbb{R}$ is arbitrary, $\rho$ is $\sigma(\cX,L^\infty)$ lower semicontinuous.

Now, assume that $\rho$ is convex.
It is clear that $\rho|_{L^\infty}$ is convex and law invariant and has the strong Fatou property. \eqref{rep-loc} implies that it is also proper.
Thus, by \cite[Theorem~2.2]{FS:12}, $\rho|_{L^\infty}$ admits a \emph{unique} convex, law-invariant
extension $\overline{\rho}:L^1\rightarrow(-\infty,\infty]$ that is norm lower semicontinuous, and thus, is $\sigma(L^1,L^\infty)$ lower semicontinuous and has the Fatou property.  Put $\rho^*=\overline{\rho}|_{\cX}$. Since $\rho$ and $\rho^*$ are both $\sigma(\cX,L^\infty)$ lower semicontinuous and coincide on $L^\infty$, $\rho=\rho^*$ by Corollary~\ref{localize}, so that $\overline{\rho}$ extends $\rho$. If $\rho|_{L^\infty}$ is cash additive, then $\overline{\rho}$ is also cash additive.
\end{proof}

\begin{example}\label{bigexamp}
\begin{enumerate}
\item\label{bigexamp1} Without law-invariance, the strong Fatou property may not imply $\sigma(\cX,L^\infty)$ lower semicontinuity. Consider $\cX=L^2$. Take $Z\in L^2\backslash L^\infty$, and put $\rho(X)=\E[XZ]$ for every $X\in L^2$. Being linear, $\rho$ is $\sigma(L^2,L^\infty)$ lower semicontinuous, if and only if, it is $\sigma(L^2,L^\infty)$ continuous, if and only if, $Z\in L^\infty$ by \cite[Theorem 3.16]{AB:06}. Thus $Z\notin L^\infty$ implies that $\rho$ is not $\sigma(L^2,L^\infty)$ lower semicontinuous. However, $\rho$ has the strong Fatou property. Indeed, let $(X_n)\subset \L^2$ and $X\in L^2 $ be such that $M:=\sup_n\norm{X_n}_2<\infty$ and $X_n\xrightarrow{a.s.}X$. We show that $\rho(X_n)=\E[X_nZ]\longrightarrow\E[XZ]=\rho(X)$.  Replacing $X_n$ with $X_n-X$, we may assume that $X=0$.
Suppose otherwise that $\E[X_nZ]\not\rightarrow0$. By passing to a subsequence, we may assume that $\abs{\E[X_nZ]}\geq \delta$ for some $\delta>0$ and all $n\in\N$. Since $X_n\xrightarrow{a.s.}0$, we can find a subsequence $(X_{n_k})$ such that $\bP(\abs{X_{n_k}}\geq \frac{1}{k})\leq \frac{1}{k}$. Then $Z^2\one_{\{\abs{X_{n_k}}\geq \frac{1}{k}\}}\longrightarrow0$ in probability and is dominated by $Z^2\in L^1$.  Dominated Convergence Theorem implies that $\bignorm{Z\one_{\{\abs{X_{n_k}}\geq \frac{1}{k}\}}}_2=\bignorm{Z^2\one_{\{\abs{X_{n_k}}\geq \frac{1}{k}\}}}_1^{\frac{1}{2}}\longrightarrow0$. It follows that
\begin{align*}
\bigabs{\E[X_{n_k}Z]}
\leq &\Bigabs{\E\big[X_{n_k}Z\one_{\{\abs{X_{n_k}}<\frac{1}{k}\}}\big]}+
\Bigabs{\E\big[X_{n_k}Z\one_{\{\abs{X_{n_k}}\geq \frac{1}{k}\}} \big]}\\
\leq&\bignorm{X_{n_k}\one_{\{\abs{X_{n_k}}< \frac{1}{k}\}}}_2\norm{Z}_2+
\norm{X_{n_k}}_2\bignorm{Z\one_{\{\abs{X_{n_k}}\geq \frac{1}{k}\}}}_2\\
\leq & \frac{1}{k}\norm{Z}_2+M\bignorm{Z\one_{\{\abs{X_{n_k}}\geq \frac{1}{k}\}}}_2
\longrightarrow 0.
\end{align*}
This contradiction completes the proof.
\item\label{bigexamp2} The extended functional $\overline{\rho}$ on $L^1$ may not have the strong Fatou property. Set $\rho(X)=\E[X]$ on $L^\infty$ and $\overline{\rho}(X)=\E[X]$ on $L^1$, respectively. Clearly, $\rho$ has the strong Fatou property, and $\overline{\rho}$ is the unique convex, law-invariant extension of $\rho$ on $L^1$ that has the Fatou property. But $\overline{\rho}$ does not have the strong Fatou property. Indeed, in view of nonatomicity, take a decreasing sequence of measurable sets $(A_n)$ such that $\bP(A_n)=\frac{1}{n}$ for any $n\in\N$. Set $X_n=-n\one_{A_n}$ for every $n\in\N$. Then $\norm{X_n}=1$ for all $n\in \N$, $X_n\xrightarrow{a.s.}0$, but $\liminf_n\E[X_n]=-1<0=\E[0]$.
\end{enumerate}
\end{example}

Clearly, the proof of Theorem~\ref{s-Fatou}  as well as that of Corollary~\ref{localize} heavily relies on Propositions \ref{con-con-e} and \ref{norm-exp}. The following questions are natural directions of possible improvements of these two propositions. A positive answer to the second question on an r.i.\ space $\cX$ would imply that quasiconvex, law-invariant functionals on $\cX$ with the Fatou property are $\sigma(\cX,L^\infty)$ lower semicontinuous. Both of these questions have positive answers in Orlicz spaces; see \cite{GLMX:18}.

\begin{question}\label{question}
\begin{enumerate}
\item Does Proposition~\ref{con-con-e} hold for norm closed sets?
\item Does Proposition~\ref{norm-exp}(2) hold for order convergence of $(\E[X|\pi_n])$ without the assumption that $\cX$ has order continuous norm?
\end{enumerate}
\end{question}

\smallskip




We now turn to study the relations between the strong Fatou property, the Fatou property, and $\sigma(\cX,L^\infty)$ lower semicontinuity. Recall that order convergence in $\cX$ implies order convergence in $L^1$ and thus $\sigma(\cX,L^\infty)$-convergence. Therefore, for \emph{any} functional $\rho:\cX\rightarrow(-\infty,\infty]$, if it is  $\sigma(\cX,L^\infty)$ lower semicontinuous, then it has the Fatou property. In particular, for any quasiconvex, law invariant functional $\rho:\mathcal{X} \rightarrow (-\infty,\infty]$, the following implications hold:
$$\text{strong Fatou property}\quad\implies\quad \sigma(\cX,L^\infty)\text{ lower semicontinuity}\quad \Longrightarrow\quad \text{Fatou property}.$$
The converse of the first implication, although not universally true (cf.~Example~\ref{bigexamp}\eqref{bigexamp2}), can be established \emph{without} law-invariance of $\rho$, under an additional but mild condition on $\cX$, which essentially excludes only $L^1$ among all classical spaces. We say that $\cX$ has Property $(*)$ if
$$\lim_{\mathbb{P}(A)\rightarrow 0}\norm{\one_A}_*=0.$$
Proposition~\ref{uo-dual} shows that it is satisfied by all Orlicz spaces and all r.i.\ spaces with order continuous norm, that are not equal to $L^1$. In particular, it is satisfied by all Orlicz hearts that are not equal to $L^1$, since they are r.i.\ spaces with  order continuous norm.

\begin{proposition}\label{reverse1}
Suppose that $\cX$ satisfies Property $(*)$.  Let $\rho:\mathcal{X} \rightarrow (-\infty,\infty]$ be $\sigma(\cX,L^\infty)$ lower semicontinuous. Then $\rho$ has the strong Fatou property.
\end{proposition}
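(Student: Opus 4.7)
The plan is to show that any norm bounded sequence $(X_n)\subset\cX$ with $X_n\xrightarrow{a.s.}X$ actually converges to $X$ in the topology $\sigma(\cX,L^\infty)$; once this is established, $\sigma(\cX,L^\infty)$ lower semicontinuity of $\rho$ will immediately give $\rho(X)\le\liminf_n\rho(X_n)$ via the standard subsequence argument (passing to a subsequence along which $\rho(X_n)$ is close to the $\liminf$, which lies in the closed sublevel set $\{\rho\le\beta'\}$, then letting $\beta'\downarrow\liminf_n\rho(X_n)$).

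To show $X_n\xrightarrow{\sigma(\cX,L^\infty)}X$, I would replace $X_n$ by $Y_n=X_n-X$, so that $Y_n\xrightarrow{a.s.}0$ and $M:=\sup_n\norm{Y_n}<\infty$. Fix $Z\in L^\infty$; by homogeneity we may assume $\norm{Z}_\infty\le1$. The task becomes $\E[Y_nZ]\to 0$. Given $\varepsilon>0$, use Property $(*)$ to choose $\delta>0$ such that $\bP(B)<\delta$ implies $\norm{\one_B}_*<\varepsilon/(2M)$. Then, since $(Y_n)$ converges a.s.\ on a probability space, Egorov's theorem produces a measurable set $A$ with $\bP(A^c)<\delta$ on which $Y_n\to 0$ uniformly. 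Split
\[
\bigabs{\E[Y_nZ]}\le\bigabs{\E[Y_nZ\one_A]}+\bigabs{\E[Y_nZ\one_{A^c}]}.
\]
The first summand is bounded by $\norm{Y_n\one_A}_\infty$, which tends to $0$ by uniform convergence on $A$. For the second summand, the Hölder inequality on $\cX$ (that is, the defining property of the Köthe dual norm $\norm{\cdot}_*$ on an r.i.\ space) gives
\[
\bigabs{\E[Y_nZ\one_{A^c}]}\le\norm{Y_n}\bignorm{Z\one_{A^c}}_*\le M\norm{Z}_\infty\norm{\one_{A^c}}_*<\varepsilon/2,
\]
where the last inequality uses the ideal property of $\norm{\cdot}_*$ and the choice of $\delta$. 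Hence $\limsup_n\abs{\E[Y_nZ]}\le\varepsilon/2$, and since $\varepsilon>0$ was arbitrary, $\E[Y_nZ]\to 0$.

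The only real obstacle is bookkeeping: one has to recall that in an r.i.\ space $L^\infty$ embeds into the Köthe dual $\cX'$ and that the Hölder inequality $\abs{\E[fg]}\le\norm{f}\norm{g}_*$ is available on $\cX\times\cX'$; both are standard facts recorded in the appendix. The rest is the familiar ``Egorov plus small tail'' argument, tailored so that the smallness of the tail is measured in the $\norm{\cdot}_*$-norm rather than in probability, which is precisely what Property $(*)$ supplies.
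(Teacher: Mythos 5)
Your proof is correct and follows essentially the same route as the paper: both establish that a norm bounded, a.s.-convergent sequence converges in $\sigma(\cX,L^\infty)$ by splitting the integral into a small-probability set, controlled via Property $(*)$ together with the H\"older inequality for $\norm{\cdot}_*$, and its complement, where $X_n-X$ is small, and then invoke $\sigma(\cX,L^\infty)$ lower semicontinuity. The only cosmetic difference is that you use Egorov's theorem to produce the exceptional set, whereas the paper extracts a subsequence with $\bP(\abs{X_{n_k}-X}\geq \tfrac{1}{k})\leq \tfrac{1}{k}$ and argues by contradiction (its Proposition~\ref{p-star-purpose} in the appendix, which actually yields the slightly stronger conclusion $\E[\abs{X_n-X}]\rightarrow 0$).
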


\begin{proof}
Suppose $(X_n)$ is a norm bounded sequence in $\cX$ that a.s.-converges to $X\in\cX$. By Proposition~\ref{p-star-purpose}, it follows that $$\bigabs{\E[X_nZ]-\E[XZ]}\leq \norm{Z}_\infty\bigabs{\E[\abs{X_n-X}]}\longrightarrow0,$$for any $Z\in L^\infty$. Thus $X_n\xrightarrow{\sigma(\cX,L^\infty)}X$, and $\sigma(\cX,L^\infty)$ lower semicontinuity of $\rho$ implies that $\rho(X)\leq\liminf_n\rho(X_n)$. Hence, $\rho$ has the strong Fatou property.
\end{proof}

\begin{remark}
Let $\cX$ be an r.i.\ space with Property $(*)$. Then, for any functional $\overline{\rho}:L^1 \rightarrow (-\infty,\infty]$ with the Fatou property, the restriction  of $\overline{\rho}$ on $\cX$ is $\sigma(\cX,L^\infty)$ lower semicontinuous and thus  has the strong Fatou property. This fact in conjunction with Theorem \ref{s-Fatou} reveals that there is a one-to-one correspondence between  convex law-invariant
 risk measures on $\cX$ with the strong Fatou property and convex law-invariant  risk measures on $L^1$ with the Fatou property.
\end{remark}

The converse of the second implication fails without law-invariance (cf.~Example~\ref{bigexamp}\eqref{bigexamp1}). Under law-invariance, the question remains open to us (cf.~Question~\ref{question} and Example~\ref{orlicz-known} below). When $\cX$ has order continuous norm, we show that all the reverse implications hold.

\begin{proposition}\label{oc-prop}
Suppose that $\cX$ has order continuous norm and $\cX\neq L^1$. Let $\rho:\cX\rightarrow (-\infty,\infty]$ be a quasiconvex, law-invariant functional.
The following are equivalent:
\begin{enumerate}
\item\label{oc-prop1} $\rho$ has the strong Fatou property.
\item\label{oc-prop2} $\rho$ is $\sigma(\cX,L^\infty)$ lower semicontinuous.
\item\label{oc-prop3} $\rho$ has the Fatou property.
\end{enumerate}
\end{proposition}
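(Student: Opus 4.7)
The implication (1) $\Rightarrow$ (2) is Theorem~\ref{s-Fatou}, and (2) $\Rightarrow$ (3) holds for any functional on $\cX$, as observed just before Proposition~\ref{reverse1}. My plan is to close the cycle by establishing (3) $\Rightarrow$ (2); the remaining link (2) $\Rightarrow$ (1) will then follow from Proposition~\ref{reverse1}, since the order continuity of the norm together with $\cX \neq L^1$ yields Property $(*)$ by Proposition~\ref{uo-dual}.

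For (3) $\Rightarrow$ (2), I would fix $m \in \R$, set $\cC := \{\rho \leq m\}$, and show that $\cC$ is $\sigma(\cX, L^\infty)$-closed. By the Fatou property, $\cC$ is convex, law-invariant, and order closed. For any net $(X_\alpha) \subset \cC$ with $X_\alpha \xrightarrow{\sigma(\cX, L^\infty)} X$, the first part of the argument from the proof of Theorem~\ref{s-Fatou} applies verbatim: Proposition~\ref{con-con-e} forces $\E[X_\alpha|\pi] \in \cC$, and the explicit formula
\[
\E[X_\alpha|\pi] = \sum_{i=1}^k \frac{\E[X_\alpha\one_{B_i}]}{\bP(B_i)}\one_{B_i}
\]
together with a countable selection and the $\tfrac{1}{n}\one \xrightarrow{o} 0$ trick yields $\E[X|\pi] \in \cC$ for every $\pi \in \Pi$.

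The new ingredient is the passage from $\{\E[X|\pi]\}_\pi \subset \cC$ to $X \in \cC$, and it is here that the order continuity of the norm enters crucially. By Proposition~\ref{norm-exp}(2) I can pick $(\pi_n) \subset \Pi$ with $\|\E[X|\pi_n] - X\| \to 0$ and $\E[X|\pi_n] \xrightarrow{a.s.} X$. Passing to a subsequence with $\sum_k \|\E[X|\pi_{n_k}] - X\| < \infty$ and setting $Y := \sum_k |\E[X|\pi_{n_k}] - X|$, norm completeness of $\cX$ provides $Y \in \cX$, so that $|\E[X|\pi_{n_k}]| \leq |X| + Y$ for every $k$; combined with the a.s.\ convergence, this gives $\E[X|\pi_{n_k}] \xrightarrow{o} X$ in $\cX$, and order closedness of $\cC$ places $X$ in $\cC$.

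The only genuinely delicate point in this plan is precisely this last upgrade from norm to order convergence of conditional expectations, which is the content of Question~\ref{question}(2) and is not available on a general r.i.\ space. The order-continuity hypothesis sidesteps that open question by the standard Banach-lattice trick of dominating a rapidly convergent subsequence by a convergent series; it is exactly this trick that forces the Fatou property, a priori weaker than $\sigma(\cX,L^\infty)$ lower semicontinuity, to already be strong enough in this restricted setting.
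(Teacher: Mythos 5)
Your proposal is correct and follows essentially the same route as the paper: the equivalences $(1)\Leftrightarrow(2)$ via Theorem~\ref{s-Fatou}, Proposition~\ref{reverse1} and Proposition~\ref{uo-dual}, and $(3)\Rightarrow(2)$ by showing $\E[X|\pi]\in\cC$ exactly as in Theorem~\ref{s-Fatou} and then invoking Proposition~\ref{norm-exp}\eqref{norm-exp2}. The only (immaterial) difference is in the last step: the paper observes that the order closed set $\cC$ is norm closed (citing \cite[Lemma~3.11]{GX:14}) and concludes from norm convergence of $\E[X|\pi_n]$, whereas you unfold that same lemma inline by extracting a rapidly norm-convergent subsequence dominated by a convergent series to get order convergence directly.
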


\begin{proof}
By Proposition~\ref{uo-dual}, $\cX$ has Property $(*)$, and thus \eqref{oc-prop1}$\iff$\eqref{oc-prop2}. It suffices to prove \eqref{oc-prop3}$\implies$\eqref{oc-prop2}. Suppose that $\rho$ has the Fatou property.
Pick any $m\in\R$ and put $\cC=\{\rho\leq m\}$. Being order closed, $\cC$ is norm closed (cf.~e.g., \cite[Lemma~3.11]{GX:14}). We show that $\cC$ is $\sigma(\cX,L^\infty)$-closed. Take any net $(X_\alpha)\subset \cC$ and $X\in \cX$ such that $X_\alpha\xrightarrow{\sigma(\cX,L^\infty)}X$.  As in the proof of Theorem~\ref{s-Fatou}, $\mathbb{E}[X|\pi]\in \cC$ for any $\pi\in\Pi$. Thus, it follows from Proposition~\ref{norm-exp}  that $X\in \cC$. This proves that $\cC$ is $\sigma(\cX,L^\infty)$-closed. Since $m\in\mathbb{R}$ is arbitrary, $\rho$ is $\sigma(\cX,L^\infty)$ lower semicontinuous.
\end{proof}

We look at quasiconvex, law-invariant functionals on Orlicz spaces and Orlicz hearts.

\begin{example}\label{orlicz-known}
Let $\rho:\cX\rightarrow (-\infty,\infty]$ be a quasiconvex, law-invariant functional.
\begin{enumerate}
\item Let $\cX$ be an Orlicz space $L^\Phi$ that is not equal to $L^1$. It was shown in \cite[Theorem 2.4]{GX:18} that $\rho$ has the strong Fatou property if and only if it is  $\sigma(L^\Phi,H^\Psi)$ lower semicontinuous. Moreover, \cite[Theorem 1.1]{GLMX:18} shows that $\rho$ is $\sigma(L^\Phi,H^\Psi)$ lower semicontinuous, if and only if, it is $\sigma(L^\Phi,L^\Psi)$ (respectively, $\sigma(L^\Phi,L^\infty)$) lower semicontinuous,  if and only if,  it has the Fatou property. The equivalence of the strong Fatou property and $\sigma(L^\Phi,L^\infty)$ lower semicontinuity also follows from Theorem~\ref{s-Fatou} and Proposition~\ref{reverse1}.
\item Let $\cX$ be an Orlicz heart $H^\Phi$ that is not equal to $L^1$. By Proposition~\ref{oc-prop}, $\rho$ has the strong Fatou property, if and only if, it is $\sigma(L^\Phi,L^\infty)$ lower semicontinuous, if and only if, it has the Fatou property. Since $H^\Phi$ has order continuous norm and $(H^\Phi)^*=L^\Psi$, the Fatou property is equivalent to norm lower semicontinuity and thus to $\sigma(H^\Phi,L^\Psi)$ lower semicontinuity. Since $L^\infty\subset H^\Psi\subset L^\Psi$, these properties are equivalent to $\sigma(H^\Phi,H^\Psi)$ lower semicontinuity. (Note that, without law-invariance, the strong Fatou property may not imply $\sigma(H^\Phi,H^\Psi)$ lower semicontinuity (see \cite{GLX:16})).
\end{enumerate}
\end{example}

Let's consider the Expected Shortfall.
For $\alpha\in(0,1)$, define Value-at-Risk at level $\alpha$  by
$$\mathrm{Var}_\alpha(X)=\inf\big\{m\in\R: \bP(X+m<0)\leq \alpha\big\},\quad X\in L^0.$$
For $\alpha\in(0,1]$, define Expected Shortfall at level $\alpha$ by
$$\mathrm{ES}_\alpha(X)=\frac{1}{\alpha}\int_0^\alpha \mathrm{Var}_\beta(X)\mathrm{d}\beta,\quad X\in L^1.$$

\begin{example}\label{ES-Fatou}
$\mathrm{ES}_1 $ has the Fatou property but not the strong Fatou property on $L^1$ (cf.~Example~\ref{bigexamp}).
However, when $\alpha\in (0,1)$, the Expected Shortfall does have the strong Fatou property on any r.i.\ space. In fact, it has the super Fatou property on $L^1$.
We include the proof for the sake of completeness. Fix any $\varepsilon\in (0,1-\alpha)$.
By Egorov's Theorem, there exist a measurable set $ B$ and $n_0 \in \mathbb{N}$ such that
$$ \mathbb{P}(B) < \varepsilon,\quad \text{ and } \quad \abs{X_n-X} < \ep \;\;\text{ on } B^c\;\; \text{ for all }  n\geq n_0.$$
Pick any $\beta\in(0,\alpha)$, and take any $n\geq n_0$. Let $m := \Var_{\beta}(X_n) $ and $ m': = m + \ep$. It follows from
$\lbrace X + m'<0 \rbrace  \subseteq   (\lbrace X + m'<0 \rbrace \cap B) \cup (\lbrace X + m'<0 \rbrace \cap \lbrace X_n <X+\varepsilon \rbrace)
 \subseteq   B\cup  \lbrace X_n +m<0\rbrace$
that $
\mathbb{P}(X + m'<0) \leq \mathbb{P}(B)+ \mathbb{P}( X_n +m<0) \leq \varepsilon +\beta$, and consequently,
\begin{align*}
\Var_{\beta+\varepsilon}(X)\leq m'=\Var_{\beta} (X_n) + \ep.
\end{align*}
Since this holds for any $\beta\in(0,\alpha)$ and any $n\geq n_0$, integrating with respect to $\beta$ over $(0,\alpha)$ implies
$\frac{1}{\alpha} \int^{\alpha+\varepsilon}_{\varepsilon} \Var_{\beta}(X)\,\mathrm{d}\beta=\frac{1}{\alpha} \int^\alpha_0 \Var_{\beta+\ep}(X)\,\mathrm{d}\beta \leq\frac{1}{\alpha} \int^{\alpha}_0 \Var_{\beta} (X_n)\,\mathrm{d}\beta+ \ep
=\ES_\alpha(X_n)+\ep$ for all $n\geq n_0$. Taking infimum over $n\geq n_0$, we have
$$\frac{1}{\alpha} \int^{\alpha+\varepsilon}_{\varepsilon} \Var_{\beta}(X)\,\mathrm{d}\beta\leq \inf_{n\geq n_0}\ES_\alpha(X_n)+\ep\leq \liminf_n \ES_\alpha(X_n)+\ep.$$
Now, since $\Var_{\bullet}(X)\in L^1(0,1]$, letting $\ep\rightarrow 0$, we have $\ES_\alpha(X)\leq\liminf_n\ES_\alpha(X_n)$.
\end{example}

\section{Inf-convolutions}

Let $\cX$ be a function space over a probability space. Given the functionals $\rho_i:\cX \rightarrow (-\infty,\infty]$, $ i=1,\dots,d$, their \emph{inf-convolution} is defined by
$$\Box_{i=1}^d \rho_i(X)=\inf\Big\{\sum_{i=1}^d \rho_i(X_i):\;\; X_i \in \cX, i=1,\dots,d, \text{ and } \sum_{i=1}^d X_i=X\Big\},\quad X\in\cX.$$
It is said to be  \emph{exact} if the infimum is attained at every $X\in\cX$.
One can easily check from definition that the inf-convolution $\Box_{i=1}^d \rho_i$ is convex if each $\rho_i$ is convex, and is $S$-additive (respectively, monotone) if some $\rho_i$ is $S$-additive (respectively, monotone). Recall that a functional $\rho:\cX\rightarrow(-\infty,\infty]$ is \emph{monotone} if $\rho(X)\leq \rho(Y)$ whenever $X,Y\in\cX$ satisfy $X\geq Y$.
Using $\rho(X)=\inf\big\{m\in \R:X+mS\in\{\rho\leq 0\}\big\}$, one sees that quasiconvex $S$-additive functionals are convex. Thus we state the results in this section for convex functionals.

The study of inf-convolutions within the framework of risk measure theory was  initiated in \cite{BE:05}. Inf-convolutions of law-invariant functionals have been studied in many papers, see, e.g, \cite{A:09,ELW:17,FS:08,JS:08,LS:18,MG:15} and the references therein. In particular, \cite[Theorem 2.5]{FS:08} asserts that inf-convolutions of convex, cash-additive, law-invariant functionals on $L^p$ ($1\leq p\leq\infty$) that are norm lower semicontinuous, or equivalently, have the Fatou property, are exact and law invariant and have the Fatou property. The following proposition extends this result to r.i.\ spaces.

\begin{proposition}\label{law-thm}
Let $\cX$ be an r.i.\ space over a nonatomic probability space, and $\rho_i:\cX \rightarrow (-\infty,\infty]$, $i=1,\dots,d$, be convex, cash-additive, law-invariant functionals with the strong Fatou property. Then $\Box_{i=1}^d \rho_i:\cX \rightarrow (-\infty,\infty]$ is convex, cash-additive, law-invariant, and exact, and has the strong Fatou property. Moreover, for each $X \in \cX$ there exist increasing functions $f_i:\mathbb{R} \rightarrow \mathbb{R}$, $i=1,\dots,d$, such that $\sum_{i=1}^d f_i(x)=x$ for every $x\in\R$ and
$$\Box_{i=1}^d \rho_i(X)=\sum_{i=1}^d \rho_i(f_i(X)).$$
\end{proposition}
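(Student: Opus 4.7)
The plan is to reduce to the known case on $L^1$ via Theorem~\ref{s-Fatou} and upgrade Fatou on $L^1$ to strong Fatou on $\cX$ by an Arzel\`a--Ascoli argument applied to the comonotonic decompositions. Convexity and cash-additivity of $\Box_{i=1}^d \rho_i$ follow immediately from the definitions, so the real content is exactness, the decomposition formula, law-invariance, and strong Fatou.

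First, by Theorem~\ref{s-Fatou} each $\rho_i$ extends uniquely to a convex, cash-additive, law-invariant functional $\overline{\rho_i}:L^1\to(-\infty,\infty]$ with the Fatou property. Applying \cite[Theorem~2.5]{FS:08} to $\overline{\rho_1},\dots,\overline{\rho_d}$ yields, for every $X\in L^1$, an exact decomposition $X=\sum_{i=1}^d f_i(X)$ where the $f_i:\R\to\R$ are increasing with $\sum_{i=1}^d f_i=\mathrm{Id}_\R$. A key elementary fact is that such $f_i$ are automatically $1$-Lipschitz: for $x<y$,
\begin{equation*}
0\leq f_i(y)-f_i(x) = (y-x)-\sum_{j\neq i}\bigl(f_j(y)-f_j(x)\bigr)\leq y-x.
\end{equation*}
After shifting by constants $c_i$ with $\sum_i c_i=0$ (which leaves $\sum_i\rho_i(f_i(X)+c_i)$ unchanged by cash-additivity), one may assume $f_i(0)=0$, hence $|f_i(X)|\leq |X|$ and $f_i(X)\in\cX$ whenever $X\in\cX$. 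The chain
\begin{equation*}
\Box_{i=1}^d\overline{\rho_i}(X) \leq \Box_{i=1}^d\rho_i(X) \leq \sum_{i=1}^d\rho_i(f_i(X)) = \sum_{i=1}^d\overline{\rho_i}(f_i(X)) = \Box_{i=1}^d\overline{\rho_i}(X)
\end{equation*}
then forces equality throughout, giving exactness of $\Box_{i=1}^d\rho_i$ on $\cX$ together with the stated decomposition formula. Law-invariance follows immediately, since equidistributed $X,Y$ produce equidistributed $f_i(X),f_i(Y)$.

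For the strong Fatou property, let $(X_n)\subset\cX$ be norm bounded with $X_n\xrightarrow{a.s.}X\in\cX$, and assume $L:=\liminf_n\Box_{i=1}^d\rho_i(X_n)<\infty$. Take exact decompositions $X_n=\sum_{i=1}^d f_{n,i}(X_n)$ with $f_{n,i}$ increasing, $1$-Lipschitz and $f_{n,i}(0)=0$. By Arzel\`a--Ascoli on compact subsets of $\R$ together with a diagonal extraction, a subsequence (still indexed by $n$) yields $f_{n,i}\to f_i$ locally uniformly, with $f_i$ increasing, $1$-Lipschitz, $f_i(0)=0$, and $\sum_i f_i=\mathrm{Id}_\R$. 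Combining local uniform convergence with $X_n\xrightarrow{a.s.}X$ gives $f_{n,i}(X_n)\xrightarrow{a.s.}f_i(X)$, while $\norm{f_{n,i}(X_n)}\leq\norm{X_n}$ stays bounded. Passing to a further subsequence so that $\Box_{i=1}^d\rho_i(X_n)\to L$ and $\rho_i(f_{n,i}(X_n))\to L_i\in(-\infty,\infty]$ for each $i$, strong Fatou of $\rho_i$ yields $\rho_i(f_i(X))\leq L_i$; since $\rho_i$ is real-valued on $\cX$, $L_i=-\infty$ is ruled out, so $\sum_i L_i=L$ with each $L_i$ finite, and
\begin{equation*}
\Box_{i=1}^d\rho_i(X) \leq \sum_{i=1}^d\rho_i(f_i(X)) \leq \sum_{i=1}^d L_i = L.
\end{equation*}
The most delicate point is this final bookkeeping step: absent a uniform lower bound on the individual $\rho_i(f_{n,i}(X_n))$, one must rule out $-\infty$ limits using real-valuedness of $\rho_i$ and the strong Fatou inequality before the additivity of limits can be invoked.
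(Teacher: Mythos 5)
Your proof is correct, and the first half (extension to $L^1$ via Theorem~\ref{s-Fatou}, invocation of \cite[Theorem~2.5]{FS:08}, the $1$-Lipschitz normalization $f_i(0)=0$, and the squeeze $\Box_{i=1}^d\overline{\rho_i}(X)\leq\Box_{i=1}^d\rho_i(X)\leq\sum_i\rho_i(f_i(X))=\Box_{i=1}^d\overline{\rho_i}(X)$) is essentially identical to the paper's argument. Where you genuinely diverge is the strong Fatou step. The paper takes the exact allocations $Y_n,Z_n$ of $X_n$ (with $\abs{Y_n},\abs{Z_n}\leq\abs{X_n}$), applies the Komlos-type Proposition~\ref{Komlos}\eqref{Komlos2} twice to extract a.s.-convergent Ces\`aro means, and then combines the strong Fatou property with \emph{convexity} of the $\rho_i$ to pass averages through. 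You instead exploit the functional form of the allocations — $f_{n,i}$ increasing, $1$-Lipschitz, vanishing at $0$ — and use Arzel\`a--Ascoli with a diagonal extraction to get $f_{n,i}\to f_i$ locally uniformly, hence $f_{n,i}(X_n)\to f_i(X)$ a.s.\ with norm bounds, and then apply strong Fatou to each $\rho_i$ directly. Your route is more elementary (no Komlos) and does not use convexity in this step, and it hands you an exact comonotone allocation for the limit $X$; the paper's route is more robust in that it only needs the domination $\abs{Y_n}\leq\abs{X_n}$ rather than the Lipschitz structure, and the same Komlos technique is reused in Lemma~\ref{accep}. Two small points to tidy: you should extract the subsequence realizing $L=\liminf_n\Box_{i=1}^d\rho_i(X_n)$ \emph{before} the Arzel\`a--Ascoli extraction (a subsequence chosen first for the $f_{n,i}$ may have a strictly larger liminf), and the phrase ``$\rho_i$ is real-valued'' should read ``$\rho_i$ never takes the value $-\infty$'' — that, combined with $\rho_i(f_i(X))\leq L_i$, is what rules out $L_i=-\infty$; neither affects the validity of the argument.
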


\begin{proof}
By induction, we may assume that $d=2$. By Theorem~\ref{s-Fatou}, each $\rho_i$ extends to a functional $\overline{\rho_i}: L^1 \rightarrow (-\infty,\infty]$ that is convex, cash additive,  law invariant,  and $||\cdot||_1$ lower semicontinuous. Let  $\overline{\rho_1} \Box \overline{\rho_2}:L^1 \rightarrow (-\infty,\infty]$ be the inf-convolution of $\overline{\rho_1}$ and $\overline{\rho_2}$. Clearly,
$$\overline{\rho_1} \Box \overline{\rho_2} (X)\leq \rho_1\Box \rho_2(X)\mbox{ for any }X\in\cX.$$
Now, pick any $X\in\cX$. By \cite[Theorem 2.5]{FS:08}, there exist increasing functions $f_1,f_2:\mathbb{R} \rightarrow \mathbb{R}$ such that $f_1(x)+f_2(x)=x$ for each $x \in \mathbb{R}$ and $\overline{\rho_1} \Box \overline{\rho_2} (X)=\overline{\rho_1}(f_1(X))+\overline{\rho_2}(f_2(X))$. Since $\rho_1,\rho_2$ are cash-additive, without loss of generality we may assume that $f_1(0)=f_2(0)=0$. One easily sees that $f_1$ and $f_2$ are $1$-Lipschitz functions and thus $\abs{f_i(X)} \leq \abs{X}$ for $i=1,2$. Since $\cX$ is an order ideal of $L^0$, we have that $f_i(X)\in \mathcal{X}$ for $i=1,2$. Therefore,
$$\overline{\rho_1} \Box \overline{\rho_2} (X)=\overline{\rho_1}(f_1(X))+\overline{\rho_2}(f_2(X))=\rho_1(f_1(X))+\rho_2(f_2(X)) \geq \rho_1\Box \rho_2(X).$$
It follows that $\overline{\rho_1} \Box \overline{\rho_2} (X)=\rho_1(f_1(X))+\rho_2(f_2(X))=\rho_1\Box \rho_2(X)$, implying that $\rho_1\Box \rho_2$ is exact and $\overline{\rho_1} \Box \overline{\rho_2}$ extends $\rho_1\Box \rho_2$. By \cite[Theorem 2.5]{FS:08}, $\overline{\rho_1} \Box \overline{\rho_2}$, and therefore $\rho_1\Box \rho_2$, is law-invariant.

In remains to show that $\rho_1\Box \rho_2$ has the strong Fatou property. Pick an arbitrary $m\in\R$, and consider the sublevel set $\cC:=\{X \in \cX: \rho_1\Box \rho_2(X) \leq m\} $. Let $(X_n)$ be a norm bounded sequence in $\cC$ that a.s.-converges to $X \in \cX$. It suffices to show that $X\in\cC$. By the exact solution described above, we can find $Y_n, Z_n \in \cX$ with $X_n=Y_n+Z_n$, $\abs{Y_n}\leq \abs{X_n}$, $\abs{Z_n} \leq \abs{X_n}$, and $\rho_1 \Box \rho_2(X_n)=\rho_1(Y_n)+\rho_2(Z_n)$.  Note that $(Y_n),(Z_n)$ are norm bounded sequences in $\cX$. Applying Proposition~\ref{Komlos}\eqref{Komlos2} twice, we can find strictly increasing $(n_j)$ and two random variables $Y,Z\in L^0$ such that $\frac{1}{k}\sum_{j=1}^kY_{n_j}\xrightarrow{a.s.} Y$ and $\frac{1}{k}\sum_{j=1}^kZ_{n_j}\xrightarrow{a.s.} Z$. Since $\abs{\frac{1}{k}\sum_{j=1}^kY_{n_j}} \leq \frac{1}{k}\sum_{j=1}^k\abs{X_{n_j}} \xrightarrow{{a.s.}} \abs{X}$, we get that $\abs{Y}\leq \abs{X} $, so that $Y\in \cX$. Similarly, we have $Z \in \cX$. Note also that $Y+Z=X$ and that $(\frac{1}{k}\sum_{j=1}^kY_{n_j})$ and $(\frac{1}{k}\sum_{j=1}^kZ_{n_j})$ are both norm bounded sequences in $\cX$. Thus, applying the strong Fatou property and convexity of $\rho_i$'s, we get that
\begin{align*}
\rho_1 \Box \rho_2(X) \leq & \rho_1(Y)+\rho_2(Z) \leq \liminf_k \rho_1\Big(\frac{1}{k}\sum_{j=1}^kY_{n_j}\Big)+\liminf_k \rho_2\Big(\frac{1}{k}\sum_{j=1}^kZ_{n_j}\Big)\\
\leq &\liminf_k \frac{\sum_{j=1}^k\rho_1(Y_{n_j})}{k}+\liminf_k \frac{\sum_{j=1}^k\rho_2(Z_{n_j})}{k} \\
\leq &\liminf_k\Big(\frac{\sum_{j=1}^k\rho_1(Y_{n_j})}{k}+ \frac{\sum_{j=1}^k\rho_2(Z_{n_j})}{k} \Big)=\liminf_k\Big(\frac{\sum_{j=1}^k
 \rho_1\Box \rho_2(X_{n_j})}{k}\Big) \leq m.
 \end{align*}
This proves that $X\in\cC$ and completes the proof of the proposition.
\end{proof}

We now turn to study the (super) Fatou property of inf-convolutions of convex $S$-additive functionals that are surplus-invariant subject to positivity. Such functionals are systematically studied in \cite{GM:18}. In particular, \cite[Theorem~29]{GM:18} asserts that the Fatou property and the super Fatou property coincide for such functionals.

Let $\cX$ be a function space over a fixed probability space. A set $\cA\subset \cX$ is \emph{surplus-invariant} if $-X^-\in \cA$ whenever $X\in \cA$, and is  \emph{monotone} if $Y\in\cA$ whenever $Y\geq X$, $Y\in \cX$ and $X\in \cA$. By \cite[Proposition 2]{GM:18}, a set $\cA\subset \cX$ is surplus-invariant and monotone if and only if $\cA=\cX_+-\cD$ for some $\cD\subset \cX_+$ that is \emph{solid in $\cX_+$}, i.e., $Y\in \cD$ whenever $0\leq Y\leq X$ for some $X\in \cD$. Moreover,
$\cA$ is (respectively, convex) order closed if and only if $\cD$ is (respectively, convex) order closed (cf.~\cite[Corollary 3 and Proposition 5]{GM:18}).

\begin{lemma}\label{accep}Let $\cX$ be a function space over a fixed probability space.
\begin{enumerate}
\item\label{accep1} Let $\cD_1$ and $\cD_2$ be convex, order closed sets of $\cX_+$ that are solid in $\cX_+$. Then $\cD_1+\cD_2$ is convex and order closed in $\cX$ and is solid in $\cX_+$.
\item\label{accep2}
Let $\cA_1$ and $\cA_2$ be convex, order closed, surplus-invariant, and monotone sets in $\cX$. Then $\cA_1+\cA_2$ is convex, order closed, surplus-variant, and monotone in $\cX$.
\end{enumerate}
\end{lemma}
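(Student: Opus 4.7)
The plan is to prove Part (1) by direct arguments and then deduce Part (2) almost formally from Part (1) via the $\cA \leftrightarrow \cD$ correspondence recalled just above the lemma. Within Part (1), convexity of $\cD_1+\cD_2$ is immediate from the convexity of each $\cD_i$, and solidity in $\cX_+$ reduces to a Riesz-decomposition argument: given $0\leq Y\leq X_1+X_2$ with $X_i\in\cD_i$, I would split $Y=(Y\wedge X_1)+\bigl(Y-Y\wedge X_1\bigr)$ and observe that each summand lies in $[0,X_i]$ and hence in $\cD_i$ by solidity.

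The hard part will be order closedness of $\cD_1+\cD_2$. Suppose $Z_n=X_n+Y_n\xrightarrow{o} Z$ in $\cX$ with $X_n\in\cD_1$ and $Y_n\in\cD_2$; since $X_n,Y_n\geq 0$, I may take the dominating element $W\in\cX_+$ with $0\leq Z_n\leq W$, forcing $0\leq X_n,Y_n\leq W$. One cannot pass to an a.s.\ convergent subsequence of $(X_n)$ directly, so I plan to apply Proposition~\ref{Komlos} twice, exactly as in the proof of Proposition~\ref{law-thm}, to produce a common subsequence $(n_k)$ along which the Ces\`aro averages $\widetilde X_k:=\frac{1}{k}\sum_{j=1}^k X_{n_j}$ and $\widetilde Y_k:=\frac{1}{k}\sum_{j=1}^k Y_{n_j}$ converge a.s.\ to some $X,Y\in L^0$. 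The bound $W$ is preserved by convex averaging, so $\widetilde X_k,\widetilde Y_k\in[0,W]$ and therefore $X,Y\in[0,W]\subset\cX_+$. Convexity of each $\cD_i$ gives $\widetilde X_k\in\cD_1$ and $\widetilde Y_k\in\cD_2$, and order closedness of the $\cD_i$ then upgrades these to $X\in\cD_1$ and $Y\in\cD_2$. Since $\frac{1}{k}\sum_{j=1}^k Z_{n_j}$ still converges a.s.\ to $Z$, the identity $Z=X+Y$ falls out and $Z\in\cD_1+\cD_2$.

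Part (2) should then follow almost formally. By \cite[Proposition 2 and Corollary 3]{GM:18} (recalled just above the lemma), write $\cA_i=\cX_+-\cD_i$ with each $\cD_i\subset\cX_+$ convex, order closed, and solid in $\cX_+$. Using $\cX_++\cX_+=\cX_+$, a one-line set-theoretic computation gives
\[
\cA_1+\cA_2=(\cX_+-\cD_1)+(\cX_+-\cD_2)=\cX_+-(\cD_1+\cD_2),
\]
and Part (1) identifies $\cD_1+\cD_2$ as convex, order closed, and solid in $\cX_+$. Invoking the converse direction of the same characterization then delivers all four claimed properties of $\cA_1+\cA_2$ in one shot.
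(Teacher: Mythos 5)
Your proposal is correct and follows essentially the same route as the paper: convexity is immediate, solidity comes from the Riesz decomposition property, order closedness is obtained by applying the Koml\'os-type Proposition~\ref{Komlos} twice to the dominated summand sequences and using convexity plus order closedness of each $\cD_i$ on the Ces\`aro averages, and Part (2) reduces to Part (1) via $\cA_i=\cX_+-\cD_i$ and $\cA_1+\cA_2=\cX_+-(\cD_1+\cD_2)$. The only cosmetic difference is that you spell out the Riesz decomposition $Y=(Y\wedge X_1)+(Y-Y\wedge X_1)$ where the paper simply cites it.
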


\begin{proof}
Clearly, $\cD_1+\cD_2$ is convex. It is also easy to check that $\cD_1+\cD_2$ is solid in $\cX_+$ by the Riesz decomposition property (\cite[Theorem~1.13]{AB:06}).
Suppose that  $(X_n)\subset \cD_1+\cD_2$ and $X\in \cX$ satisfy $ X_n\stackrel{o}{\longrightarrow}X$ in $\cX$. We want to show that $X\in\cD_1+\cD_2 $.
Write $X_n=Y_n+Z_n$, where $Y_n\in\cD_1$ and $Z_n\in\cD_2$.
Take $X_0\in \cX_+$ such that $0\leq {X_n}\leq X_0$ for all $n\in\N$. Then $0\leq Y_n\leq X_0$ and $0\leq Z_n\leq X_0$ for all $n$.
Applying Proposition~\ref{Komlos}\eqref{Komlos1} twice, we find strictly increasing $(n_j)$ and two random variables $Y,Z\in L^0$ such that $\frac{1}{k}\sum_{j=1}^kY_{n_j}\xrightarrow{a.s.} Y$ and $\frac{1}{k}\sum_{j=1}^kZ_{n_j}\xrightarrow{a.s.} Z$.
Clearly, $Y+Z=X$ and $0\leq Y,Z\leq X_0$, implying that $Y,Z\in \cX$. Since $0\leq \frac{1}{k}\sum_{j=1}^kY_{n_j}\leq X_0$ for all $k\in\N$, we have
$\frac{1}{k}\sum_{j=1}^kY_{n_j}\stackrel{o}{\longrightarrow}Y$, and thus by convexity and order closedness of $\cD_1$, $Y\in \cD_1$. Similarly, $Z\in \cD_2$.
Thus $X\in\cD_1+\cD_2$. This proves that $\cD_1+\cD_2$ is order closed.

For \eqref{accep2}, write $\cA_i=\cX_+-\cD_i$, $i=1,2$, as described preceding the lemma. Then $\cA_1+\cA_2=\cX_+-\cD_1+\cX_+-\cD_2=\cX_+-(\cD_1+\cD_2)$. By \eqref{accep1}, one sees that $\cA_1+\cA_2$ has the desired properties.
\end{proof}

Let $0<S\in\cX$. It is known (and easy to check) that $\rho$ is $S$-additive if and only if $\{\rho\leq m\}=\{\rho\leq 0\}-mS$ for every $m\in \R$ and that if $\rho$ is $S$-additive and monotone, then $\rho$ is surplus-invariant subjective to positivity if and only if $\{\rho\leq 0\}$ is surplus-invariant (\cite[Proposition~28]{GM:18}).

\begin{proposition}\label{si-thm}Let $\cX$ be a function space over a probability space, $0<S\in\cX$, and $\rho_i:\cX\rightarrow(-\infty,\infty]$, $i=1,\dots,d$, be convex, monotone, $S$-additive functionals that are surplus-invariant subject to positivity and have the (super) Fatou property . If $\Box_{i=1}^d\rho_i(X)> -\infty$ for each $X \in \cX$, then $\Box_{i=1}^d\rho_i$ is convex, monotone, $S$-additive, exact, and surplus-invariant subject to positivity and  has the (super) Fatou property.
\end{proposition}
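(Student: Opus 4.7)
The natural approach is to reduce the entire statement to properties of the combined acceptance set $\cA := \cA_1 + \cdots + \cA_d$, where $\cA_i := \{\rho_i\leq 0\}$, and then to exploit Lemma~\ref{accep}\eqref{accep2}. By convexity, monotonicity, and the Fatou property of $\rho_i$, each $\cA_i$ is convex, monotone, and order closed. Moreover, \cite[Proposition~28]{GM:18}, which characterises surplus-invariance subject to positivity of a monotone $S$-additive functional via surplus-invariance of its zero sublevel set, renders each $\cA_i$ surplus-invariant. Inductively applying Lemma~\ref{accep}\eqref{accep2} then gives that $\cA$ is convex, order closed, surplus-invariant, and monotone.

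Next I would establish the acceptance-set representation
\[
\Box_{i=1}^d \rho_i(X) = \inf\{m \in \R : X + mS \in \cA\}, \quad X \in \cX.
\]
The inequality $\Box_{i=1}^d \rho_i(X) \geq \inf\{m : X + mS \in \cA\}$ follows from $S$-additivity: for any decomposition $X = X_1 + \cdots + X_d$ with $\rho_i(X_i)<\infty$ and any $\varepsilon>0$, each $X_i + (\rho_i(X_i) + \varepsilon)S$ lies in $\cA_i$, and summing shows $X + \bigl(\sum_i \rho_i(X_i) + d\varepsilon\bigr)S \in \cA$. For the reverse inequality, if $X + mS = Y_1 + \cdots + Y_d$ with $Y_i \in \cA_i$, then the decomposition $(Y_1 - mS, Y_2, \dots, Y_d)$ of $X$ has total cost $\rho_1(Y_1) + m + \sum_{i\geq 2}\rho_i(Y_i)\leq m$.

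From this representation, convexity, monotonicity, and $S$-additivity of $\Box_{i=1}^d \rho_i$ transfer immediately from the corresponding properties of $\cA$. For the Fatou property, I would verify that $\{\Box_{i=1}^d \rho_i \leq m\} = \cA - mS$: the nontrivial inclusion observes that if $\Box_{i=1}^d \rho_i(X) \leq m$, then $(X + (m + \tfrac{1}{n})S)_n \subset \cA$ is order-bounded by $X + (m+1)S$ and a.s.-decreases to $X + mS$, hence order-converges to it, so $X + mS \in \cA$ by order closedness of $\cA$. Every sublevel set of $\Box_{i=1}^d \rho_i$ is thus order closed, giving the Fatou property; the super Fatou property then follows from \cite[Theorem~29]{GM:18}. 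Surplus-invariance subject to positivity of $\Box_{i=1}^d \rho_i$ follows from the other direction of \cite[Proposition~28]{GM:18}, applied to the monotone $S$-additive functional $\Box_{i=1}^d \rho_i$, since $\{\Box_{i=1}^d \rho_i \leq 0\} = \cA$ is surplus-invariant.

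For exactness, fix $X \in \cX$ with $c := \Box_{i=1}^d \rho_i(X) \in \R$ (finite by the hypothesis $c > -\infty$; the case $c = \infty$ is vacuous). As in the Fatou argument, choosing $m_n \downarrow c$ with $X + m_n S \in \cA$ and invoking order closedness of $\cA$ forces $X + cS \in \cA$. Writing $X + cS = Y_1 + \cdots + Y_d$ with $Y_i \in \cA_i$ and setting $\tilde X_1 := Y_1 - cS$, $\tilde X_i := Y_i$ for $i \geq 2$, one has $\sum_i \tilde X_i = X$ and
\[
\sum_{i=1}^d \rho_i(\tilde X_i) = \rho_1(Y_1) + c + \sum_{i=2}^d \rho_i(Y_i) \leq c,
\]
which must be equality, yielding exactness. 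The principal subtlety throughout is the attainment of the infimum defining $\Box_{i=1}^d\rho_i(X)$, which is precisely what order closedness of $\cA$ delivers; once Lemma~\ref{accep}\eqref{accep2} supplies that, the acceptance-set formalism mechanises the rest.
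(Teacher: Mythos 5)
Your proposal is correct and follows essentially the same route as the paper: both reduce everything to the acceptance set $\cA=\sum_{i}\{\rho_i\leq 0\}$, invoke Lemma~\ref{accep} for its order closedness and surplus-invariance, identify it (up to the $S$-shift) with the sublevel sets of $\Box_{i=1}^d\rho_i$ via a limiting argument along $X+(m+\tfrac1n)S$, and read off the Fatou/super Fatou property, surplus-invariance subject to positivity (via \cite[Proposition~28]{GM:18} and \cite[Theorem~29]{GM:18}) and exactness from there. The only cosmetic differences are that the paper reduces to $d=2$ by induction and proves the set identity at level $0$ directly by an $\varepsilon$-perturbation rather than through the representation $\inf\{m: X+mS\in\cA\}$.
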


\begin{proof}
Without loss of generality, assume $d=2$.
As remarked at the beginning of this section, $\rho_1\Box\rho_2$ is convex, monotone and $S$-additive.
Since $\{\rho_i\leq 0\}$, $i=1,2$, is convex, order closed, monotone, and surplus-invariant by the preceding remark, it follows from Lemma~\ref{accep} that $\{\rho_1\leq 0\}+\{\rho_2\leq 0\}$ is  also order closed and surplus-invariant.
We claim that $$\{\rho_1\leq 0\}+\{\rho_2\leq 0\}= \{\rho_1\Box\rho_2\leq 0\}.$$ The inclusion ``$\subset$'' is clear. For the reverse inclusion, take any $X\in\cX$ such that $\rho_1\Box\rho_2(X)\leq 0$. If $\rho_1\Box\rho_2(X)<0$, then there exist $Y,Z\in\cX$ such that $X=Y+Z$ and $\rho_1(Y)+\rho_2(Z)<0$. Take $\ep>0$ and set $Y'=Y+(\rho_1(Y)+\varepsilon)S$ and $Z'=Z-(\rho_1(Y)+\varepsilon)S$. Then $\rho_1(Y')=-\ep<0$ and $\rho_2(Z')=\rho_1(Y)+\rho_2(Z)+\ep$.
We may take $\ep$ small enough so that $\rho_2(Z')<0$ as well. Then $X=Y'+Z'\in \{\rho_1\leq 0\}+\{\rho_2\leq 0\}$. If  $\rho_1\Box\rho_2(X)=0$, then $\rho_1\Box\rho_2\big(X+\frac{1}{n}S\big)=-\frac{1}{n}<0$, so that $X+\frac{1}{n}S\in \{\rho_1\leq 0\}+\{\rho_2\leq 0\}$ for any $n\in\N$. Since $X+\frac{1}{n}S\stackrel{o}{\longrightarrow}X$, it follows that $X\in \{\rho_1\leq 0\}+\{\rho_2\leq 0\}$ by order closedness of the latter set.  This proves the claim. Consequently, $\{\rho_1\Box\rho_2\leq 0\}$ is order closed and surplus-invariant, and therefore, $\rho_1\Box\rho_2$ is surplus-invariant subject to positivity and has the Fatou, and thus super Fatou, property, by the remarks preceding the proposition and Lemma~\ref{accep}.
Finally, we prove that $\rho_1\Box\rho_2$ is exact. Pick any $X\in \cX$. If $\rho_1\Box\rho_2(X)=\infty$, there is nothing to prove. Thus assume $\rho_1\Box\rho_2(X)\in\R$. Since $\rho$ is $S$-additive, we may assume that $\rho_1\Box\rho_2(X)=0$. Then $X\in \{\rho_1\leq 0\}+\{\rho_2\leq 0\}$. Write $X=X_1+X_2$ with $X_i\in\{\rho_i\leq 0\}$, i.e., $\rho_i(X_i)\leq 0$, for $i=1,2$. It follows that $0=\rho_1\Box\rho_2(X)\leq \rho_1(X_1)+\rho_2(X_2)\leq 0$. Therefore, $\rho_1(X_1)+\rho_2(X_2)=0$.
\end{proof}

\bigskip

\begin{appendix}
\numberwithin{theorem}{section}
\section{Function Spaces}\label{R.I.}

We collect some basic notions and facts about function spaces, in particular, rearrangement invariant spaces.
Fix a probability space $(\Omega,\cF,\bP)$. A {\em function space} over $(\Omega,\cF,\bP)$ is an order ideal of $L^0:=L^0(\Omega,\cF,\bP)$, i.e., a  subspace of $L^0$ such that if $X\in\cX$ and $Y$ is a random variable such that $\abs{Y}\leq \abs{X}$ then $Y\in \cX$.  A linear functional $\phi$ on a function space $\cX$ is said to be \emph{order continuous} if $\phi(X_n)\rightarrow0$ whenever $X_n\xrightarrow{o}0$ in $\cX$.
The collection of all order continuous linear functional on $\cX$ is called the \emph{order continuous dual} of $\cX$ and is denoted by $\cX_n^\sim$.
For every $\phi\in \cX_n^\sim$, there exists $Y\in L^0$ such that $\E[\abs{XY}]<\infty$ for all $X\in\cX$ and \begin{align}\label{dual}\phi(X)=\E[XY],\qquad X\in\cX;\end{align}
in fact, $Y$ is uniquely determined on the support of $\cX$.
The converse is also true, i.e., every $Y\in L^0$ such that $\E[\abs{XY}]<\infty$ for all $X\in\cX$ determines some $\phi\in\cX_n^\sim$ via \eqref{dual}.
We thus identify $\cX_n^\sim$ as a function space.
For a {\em Banach function space} $\cX$ over $(\Omega,\cF,\bP)$, i.e., a function space endowed with a complete norm such that $\norm{X}\leq \norm{Y}$ whenever $X,Y\in\cX$ and $\abs{X}\leq \abs{Y}$, it is well-known that $\cX_n^\sim$ is a Banach function space itself (cf.~\cite[Theorem 2.6.4]{MN:91}), $\cX_n^\sim\subset \cX^* $, where $\cX^*$ is the norm dual of $\cX$, and $\cX_n^\sim=\cX^*$ if and only if $\cX$ has order continuous norm. Recall that  $\cX$ has \emph{order continuous norm} if $\norm{X_n}\longrightarrow0$ whenever $X_n\stackrel{o}{\longrightarrow}0$ in $\cX$.
For a random variable $Y\in\cX_n^\sim$ we denote its norm as a linear functional on $\cX$ by
$$\norm{Y}_*=\sup\big\{\E[XY]:X\in\cX,\norm{X}\leq 1\big\}.$$

The following versions of Komlos' Theorem are very useful.
\begin{proposition}\label{Komlos}
Let $(X_n)$ be a sequence of random variables in a function space $\cX$. Then there exists a random variable $X$ (not necessarily in $\cX$) and a subsequence $(X_{n_k})$ of $(X_n)$ such that the arithmetic means of all subsequences of $(X_{n_k})$ converges to $X$ almost surely, if any of the following are satisfied:
\begin{enumerate}
\item\label{Komlos1} There exists $X_0\in L^0$ such that $\abs{X_n}\leq X_0$ for all $n\in\N$,
\item\label{Komlos2} $\cX$ is a Banach function space and $(X_n)$ is norm bounded.
\end{enumerate}
\end{proposition}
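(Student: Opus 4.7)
The plan is to reduce each case to the classical Komlos theorem in $L^1$, which says: if $(U_n)\subset L^1$ is $L^1$-bounded, then there exist $U\in L^1$ and a subsequence $(U_{n_k})$ such that the arithmetic means of every further subsequence converge a.s.\ to $U$. The two parts reduce to this in different ways; part (1) via pointwise rescaling, and part (2) via an equivalent change of probability measure, which is the step that needs the most care.

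For part (1), I would set $Z := 1 + X_0$, which is strictly positive and finite a.s., and define $Y_n := X_n/Z$. Since $\abs{Y_n}\leq X_0/(1+X_0)\leq 1$ a.s., we have $(Y_n)\subset L^\infty \subset L^1$ with $\sup_n\E[\abs{Y_n}]\leq 1$. The classical Komlos theorem produces $Y\in L^1$ and a subsequence $(Y_{n_k})$ such that every sub-subsequence has a.s.\ convergent Cesaro means with limit $Y$. Multiplying through by $Z$ gives the same property for $(X_{n_k})$ with limit $X := ZY$, which lies in $L^0$ even though it need not belong to $\cX$.

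For part (2), the plan is to change measure so that $(X_n)$ becomes $L^1$-bounded under an equivalent probability $\mathbb{Q}$. Using standard Banach function space theory, $\cX_n^\sim$ is an ideal of $L^0$ that admits a strictly positive element $W$ on the support of $\cX$ (by countable exhaustion; outside the support every $X_n$ vanishes, so we restrict). Replacing $W$ by $W\wedge 1$ if necessary---which stays in $\cX_n^\sim$ since $\cX_n^\sim$ is an ideal of $L^0$---we may also assume $W\in L^\infty$, so $\E[W]<\infty$, and then take $d\mathbb{Q}/d\mathbb{P} := W/\E[W]$. Using that $W\in\cX_n^\sim$ has a well-defined dual norm $\norm{W}_*$, norm-boundedness $\sup_n\norm{X_n}\leq M$ yields
\[
\E_\mathbb{Q}[\abs{X_n}] \;=\; \frac{\E[\abs{X_n}W]}{\E[W]} \;\leq\; \frac{\norm{X_n}\,\norm{W}_*}{\E[W]} \;\leq\; \frac{M\,\norm{W}_*}{\E[W]}.
\]
Hence $(X_n)$ is $L^1(\mathbb{Q})$-bounded, and the classical Komlos theorem applied under $\mathbb{Q}$ produces the desired subsequence and limit $X$. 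Since $\mathbb{Q}\sim\mathbb{P}$, a.s.-$\mathbb{Q}$ convergence coincides with a.s.-$\mathbb{P}$ convergence.

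The main obstacle is the construction of $W$ in part (2): one must verify that every Banach function space on a probability space admits a strictly positive functional in $\cX_n^\sim$ that is simultaneously integrable. This is standard but requires invoking the ideal structure of $\cX_n^\sim$ in $L^0$ together with the identification \eqref{dual}. Once this weight is in hand, the change-of-measure step is routine, and the rest of the argument is immediate from classical Komlos.
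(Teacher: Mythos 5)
Your proof is correct and follows essentially the same route as the paper: both parts are reduced to the classical Komlos theorem in $L^1$ of an equivalent finite measure, with part (2) using a strictly positive element of $\cX_n^\sim$ (the paper cites \cite{GTX:17} for its existence) as the density and the bound $\E[\abs{X_n}W]\leq\norm{X_n}\norm{W}_*$. Your part (1) rescales the random variables by $1+X_0$ rather than the measure, but this is the same device in disguise.
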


\begin{proof}
For \eqref{Komlos1}, put $\mathrm{d}\mu=\frac{1}{1+X_0}\mathrm{d}\bP$. Then $\mu$ is a finite measure on $(\Omega,\cF)$ and is equivalent to $\bP$. Since $(X_n)$ is clearly norm bounded in $L^1(\mu)$, the desired result follows from  Komlos' Theorem for $L^1(\mu) $.
For \eqref{Komlos2}, let $0\leq Y\in\cX_n^\sim$ be such that every $X\in\cX$ vanishes outside $\{Y>0\}$ up to a null set (cf.~\cite[Theorem 5.19]{GTX:17}). Put $\mathrm{d}\mu=\frac{Y+\one_{\{Y\leq 0\}}}{Y+\one}\mathrm{d}\bP$. Then $\mu$ is a finite measure on $(\Omega,\cF)$ and is equivalent to $\bP$. Moreover, $$\sup_n\norm{X_n}_{L^1(\mu)}\leq \sup_n\E[\abs{X_n}Y]\leq \norm{Y}_*\sup_n\norm{X_n}<\infty.$$
Again, the desired result follows from Komlos' Theorem for $L^1(\mu) $.
\end{proof}

\smallskip

\emph{For the rest of the appendix, we assume that $(\Omega,\cF,\bP)$ is nonatomic.} Let $\cX$ be a \emph{rearrangement invariant (r.i.)} space over $(\Omega,\cF,\bP)$, i.e., a Banach function space $\cX\neq\{0\}$ such that $X\in\cX$ whenever $X$ is a random variable that has the same distribution as some member of $\cX$. For two r.i.\ spaces $\cX$ and $\mathcal{Y}$, we write $\cX\subset \mathcal{Y}$ if every member of $\cX$ belongs to $\mathcal{Y}$, and we write $\cX=\mathcal{Y}$ or say that $\cX$ and $\mathcal{Y}$ are equal if they have the same members.
By \cite[Corollary~6.7, p.78]{BS:88}\footnote{One needs to be careful when citing \cite{BS:88} since all the Banach function spaces $\cX$ there are assumed to satisfy that $X\in \cX$ and $\norm{X}=\sup_n\norm{X_n}$ whenever ${X}$ is the a.s.-limit of a norm bounded, increasing, positive sequence with terms in $\cX$. We do not assume this extra condition, and the results we cite in this paper do not rely on this condition.}, it holds that
\begin{align}\label{contain}
L^\infty\subset \cX\subset L^1,
\end{align}
and there exist two constants $C_1,C_2>0$ such that
\begin{align}\label{iso-contain}
\norm{X}\leq C_1\norm{X}_\infty\;\;\;\forall X\in L^\infty \qquad\text{and}\qquad\norm{X}_1\leq C_2\norm{X}\;\;\;\forall X\in \cX.
\end{align}
For $t\in(0,1]$, let $$\varphi_\cX(t)=\norm{\one_E}$$ where $E\in\cF$ and $\bP(E)=t$. It is called the \emph{fundamental function} of $\cX$. Let $\cX^b$ be the norm closure of $L^\infty$ in $\cX$, and let $\cX^a$, called the \emph{heart} or the \emph{order continuous part} of $\cX$, be the collection of all $X\in \cX$ such that $\norm{X\one_{A_n}}\rightarrow 0$ whenever $A_n\downarrow \emptyset$. One can see that $X\in \cX^b$ iff $\bignorm{(\abs{X}-n\one)^+}=0$. From this it follows that $\cX^b$ is an r.i.\ space itself.

\begin{lemma}\label{aaa}
\begin{enumerate}
\item If $\lim_{t\rightarrow0^+}\varphi_\cX(t)>0$, then $\cX=L^\infty$ and $\cX^a=\{0\}$.
\item $\lim_{t\rightarrow0^+}\varphi_\cX(t)=0$ iff  $\cX^a=\cX^b$. In this case, $X^b$ is order continuous.
\end{enumerate}
\end{lemma}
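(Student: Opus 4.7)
The plan is to treat the two parts in sequence, grounding both on the fact that $\varphi_\cX$ is an increasing function on $(0,1]$. Monotonicity follows from rearrangement invariance and nonatomicity: if $\bP(E) \leq \bP(F)$, one embeds a set of measure $\bP(E)$ inside $F$ and compares norms. Hence $\lim_{t\to 0^+}\varphi_\cX(t) = \inf_{t>0}\varphi_\cX(t)$.

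For part (1), write $c := \lim_{t\to 0^+}\varphi_\cX(t) > 0$, so $\norm{\one_E} \geq c$ for every $E$ of positive measure. Given $X \in \cX$ and setting $A_n = \{\abs{X} > n\}$, from $\abs{X} \geq n\one_{A_n}$ one gets $\norm{X} \geq nc$ whenever $\bP(A_n) > 0$; this forces $\bP(A_n) = 0$ for some $n$, i.e.\ $X \in L^\infty$. Combined with the inclusion $L^\infty \subset \cX$ from \eqref{contain}, this gives $\cX = L^\infty$. For $\cX^a = \{0\}$, given $0 \neq X \in \cX$ pick $\varepsilon > 0$ with $\bP(\abs{X} > \varepsilon) > 0$, partition $\{\abs{X}>\varepsilon\}$ into countably many positive-measure pieces $B_k$ via nonatomicity, and set $A_n = \bigsqcup_{k\geq n} B_k \downarrow \emptyset$. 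Then $\norm{X\one_{A_n}} \geq \varepsilon\norm{\one_{A_n}} \geq \varepsilon c$ does not tend to $0$, so $X \notin \cX^a$.

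For part (2), I first establish $\cX^a \subset \cX^b$ unconditionally. Given $X \in \cX^a$, the sets $A_n = \{\abs{X}>n\}$ decrease to $\emptyset$ (since $X$ is a.s.\ finite), so $\norm{X\one_{A_n}} \to 0$, and the pointwise bound $(\abs{X}-n\one)^+ \leq \abs{X}\one_{A_n}$ yields $X \in \cX^b$. Now assume $\lim_{t\to 0^+}\varphi_\cX(t) = 0$ and fix $X \in \cX^b$ together with $A_n \downarrow \emptyset$. For any $m \in \N$ the decomposition $\abs{X}\one_{A_n} \leq (\abs{X}-m\one)^+ + m\one_{A_n}$ gives
\[
\norm{X\one_{A_n}} \leq \bignorm{(\abs{X}-m\one)^+} + m\,\varphi_\cX\bigl(\bP(A_n)\bigr).
\]
Choosing $m$ large to control the first term and then $n$ large (using $\bP(A_n) \to 0$ and the hypothesis on $\varphi_\cX$) to control the second shows $X \in \cX^a$, yielding $\cX^b = \cX^a$. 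Conversely, if $\lim_{t\to 0^+}\varphi_\cX(t) > 0$, then part (1) gives $\cX^a = \{0\}$, while $\cX^b \supset L^\infty \neq \{0\}$, so the two cannot coincide.

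The order continuity of $\cX^b$ in the equivalent case now follows immediately from $\cX^b = \cX^a$: every element of $\cX^a$ has absolutely continuous norm by definition, and this property characterizes order continuity of the norm for Banach function spaces. No step should be a real obstacle; the only point requiring minor care is the unconditional inclusion $\cX^a \subset \cX^b$, which relies on the implicit fact that $\{\abs{X}>n\}$ decreases to a null set for any $X \in L^0$.
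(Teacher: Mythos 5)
Your proof is correct, and for part (2) it takes a genuinely different route from the paper: the paper simply cites \cite[Theorem~5.5, p.67]{BS:88} for the equivalence $\lim_{t\to0^+}\varphi_\cX(t)=0\iff\cX^a=\cX^b$, whereas you reprove it from scratch via the unconditional inclusion $\cX^a\subset\cX^b$ (using $(\abs{X}-n\one)^+\leq\abs{X}\one_{\{\abs{X}>n\}}$) and the truncation estimate $\norm{X\one_{A_n}}\leq\bignorm{(\abs{X}-m\one)^+}+m\,\varphi_\cX(\bP(A_n))$ for the reverse inclusion, handling the converse by reduction to part (1). This makes the appendix self-contained at the cost of a few standard facts you still invoke without proof (monotonicity of $\varphi_\cX$, and the equivalence between order continuity of the norm and absolute continuity of the norm of every element, which is what justifies the final sentence of (2)); the paper's citation buys brevity and avoids restating those. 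For part (1) your argument that $\cX=L^\infty$ is essentially identical to the paper's (compare $\abs{X}\geq M\one_{\{\abs{X}>M\}}$ against $\norm{\one_E}\geq c$), and you additionally supply an explicit proof that $\cX^a=\{0\}$ via a countable partition of $\{\abs{X}>\varepsilon\}$ into positive-measure pieces, a detail the paper leaves implicit. I see no gaps; the only point worth flagging is that in your final step the claimed characterization of order continuity should be applied to $\cX^b$ as a Banach function space in its own right (it is a closed ideal of $\cX$, hence such a space), which is immediate but deserves a word.
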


\begin{proof}
(1) Suppose $\delta:=\lim_{t\rightarrow0^+}\varphi_\cX(t)>0$. Then $\norm{\one_{E}}\geq \delta$ for any $E\in\cF$ with $\bP(E)>0$. Pick any $X\in \cX$. It suffices to show that $X\in L^\infty$. If not, then $\bP(\{\abs{X}>M\})>0$ for any $M>0$. It follows from $\frac{\abs{X}}{M}\geq \one_{\{\abs{X}>M\}}$ that $\norm{X}\geq M\delta$. Letting $M\rightarrow \infty$, we get a contradiction.

(2) is \cite[Thm 5.5, p.67]{BS:88}.
\end{proof}

Since $X_n^\sim$ is also an r.i.~space (\cite[Proposition~4.2, p.59]{BS:88}), $
L^\infty\subset \cX_n^\sim\subset L^1$ as well.
Lemma~\ref{aaa}(1) applied to $\cX_n^\sim$ implies that Property $(*)$ of $\cX$ is equivalent to $\cX_n^\sim\neq L^\infty$.

\begin{proposition}\label{uo-dual}
Suppose $\cX\neq L^1$.
Suppose also that $\cX$ has order continuous norm or it contains the a.s.-limits of all norm bounded, increasing, positive sequences with terms in it. Then $\cX$ has Property $(*)$.
\end{proposition}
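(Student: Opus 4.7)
The plan is to use the equivalence, noted in the paragraph preceding the statement, that Property $(*)$ holds for $\cX$ if and only if $\cX_n^\sim \neq L^\infty$, and to prove the contrapositive: \emph{if $\cX_n^\sim = L^\infty$ as function spaces, then $\cX = L^1$} (contradicting the hypothesis). So I would assume $\cX_n^\sim = L^\infty$. The closed graph theorem applied to the two complete Banach function space norms $\norm{\cdot}_*$ and $\norm{\cdot}_\infty$ on the common set $L^\infty$ yields their equivalence. Moreover, we cannot have $\cX = L^\infty$ (else $\cX_n^\sim$ would be $L^1 \neq L^\infty$), so Lemma~\ref{aaa}(1) gives $\varphi_\cX(t) \to 0$ as $t \to 0^+$; in particular $L^\infty \subset \cX^a$, since for $Z \in L^\infty$ and $A_n \downarrow \emptyset$ one has $\norm{Z\one_{A_n}}_\cX \leq \norm{Z}_\infty\,\varphi_\cX(\bP(A_n)) \to 0$.

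To obtain $L^1 \subset \cX$, I would fix $X \in L^1_+$ and set $X_n := X \wedge n \in L^\infty \subset \cX$, so that $X_n \uparrow X$ almost surely. Under the first hypothesis (order continuous norm of $\cX$), $\cX^* = \cX_n^\sim = L^\infty$ with equivalent norm, so the $\cX$-norm admits the dual representation
\[\norm{Z}_\cX = \sup\{|\E[ZY]| : Y \in \cX_n^\sim,\ \norm{Y}_* \leq 1\} \asymp \norm{Z}_1\]
on $\cX$. Hence $(X_n)$ is Cauchy in $\cX$ (being $L^1$-Cauchy by dominated convergence), converging to some $W \in \cX$; continuity of $\cX \hookrightarrow L^1$ forces $W = X$, so $X \in \cX$.

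Under the second hypothesis (Fatou containment), I would instead bound $\sup_n \norm{X_n}_\cX$ directly. For $Y \in \cX^*$ with $\norm{Y}_{\cX^*} \leq 1$, decompose $Y = Y^{\mathrm{oc}} + Y^{\mathrm{s}}$ into its order-continuous and singular parts in the Banach lattice $\cX^*$: then $|Y^{\mathrm{oc}}| \wedge |Y^{\mathrm{s}}| = 0$, $Y^{\mathrm{oc}} \in \cX_n^\sim = L^\infty$, and $\norm{Y^{\mathrm{oc}}}_{\cX^*} \leq \norm{Y}_{\cX^*} \leq 1$. Since singular functionals vanish on $\cX^a \supset L^\infty \ni X_n$,
\[|Y(X_n)| = |\E[X_n Y^{\mathrm{oc}}]| \leq \norm{X_n}_1 \norm{Y^{\mathrm{oc}}}_\infty \leq C \norm{X}_1,\]
using $\norm{Y^{\mathrm{oc}}}_\infty \asymp \norm{Y^{\mathrm{oc}}}_* \leq 1$. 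Taking the supremum over such $Y$ yields $\sup_n \norm{X_n}_\cX \leq C \norm{X}_1 < \infty$, and the Fatou containment hypothesis then forces $X \in \cX$.

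The main technical obstacle is the Fatou-containment case, which relies on the lattice decomposition $\cX^* = \cX_n^\sim \oplus \cX_s^\sim$ (order-continuous plus singular parts) together with the vanishing of singular functionals on the order-continuous ideal $\cX^a$. These are classical Banach-lattice facts, but identifying the order-continuous band in $\cX^*$ with $\cX_n^\sim$ and verifying the norm estimate $\norm{Y^{\mathrm{oc}}}_{\cX^*} \leq \norm{Y}_{\cX^*}$ require careful application in this function-space setting.
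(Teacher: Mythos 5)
Your overall strategy --- passing to the contrapositive $\cX_n^\sim=L^\infty\Rightarrow\cX=L^1$ and comparing $\norm{\cdot}_*$ with $\norm{\cdot}_\infty$ via a closed-graph/open-mapping argument --- is exactly the paper's, and your treatment of the order-continuous case is essentially the paper's argument in different words: the paper obtains $\norm{X}\leq C\norm{X}_1$ for all $X\in\cX$ from $\cX^*=\cX_n^\sim=L^\infty$ and concludes that $\cX$ is norm closed in $L^1$, hence equals $L^1$ by density of $L^\infty$, which is what your Cauchy-sequence phrasing amounts to. Where you genuinely diverge is the second case. The paper cites \cite[Proposition~2.4.19(i) and Lemma~2.4.20]{MN:91}, i.e.\ the Lorentz--Luxemburg-type fact that, under the hypothesis of containing the a.s.-limits of norm bounded increasing positive sequences, the norm of $\cX$ is equivalent to the second associate norm $\sup_{\norm{Y}_*\leq1}\E[XY]$; this immediately yields $\norm{X}\leq rC\norm{X}_1$ on \emph{all} of $\cX$ and reduces the case to the previous one. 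You instead prove the inequality only on $L^\infty$ via the band decomposition $\cX^*=\cX_n^\sim\oplus(\cX_n^\sim)^d$ and invoke the containment hypothesis only once, at the very end, to pass from $X\wedge n$ to $X$; this is a legitimate alternative, but the entire weight of it rests on the assertion that every functional disjoint from $\cX_n^\sim$ annihilates $\cX^a$ --- precisely the step you flag as delicate. That assertion is true here, but it is not a one-line quotation in the band-disjointness form you use (the standard statements are either the global equivalence $\cX^*=\cX_n^\sim\iff\cX$ has order continuous norm, or treatments where ``singular'' is \emph{defined} as vanishing on $\cX^a$), so you should either give a precise reference or prove it; a direct proof is available in this setting: for $0\leq\phi$ disjoint from the (order continuous) expectation functional, the Riesz--Kantorovich formula gives, for each $\ep>0$, some $0\leq Z\leq\one$ with $\phi(Z)<\ep$ and $\E[\one-Z]<\ep$, whence by Chebyshev $\phi(\one)\leq\ep+\norm{\phi}\,\norm{\one-Z}\leq\ep+\norm{\phi}\bigl(\sqrt{\ep}\,\varphi_\cX(1)+\varphi_\cX(\sqrt{\ep})\bigr)\longrightarrow0$ using Lemma~\ref{aaa} and $\cX\neq L^\infty$, and then $\phi$ vanishes on all of $L^\infty$ by positivity and continuity. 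With that step supplied your proof is complete; the paper's route avoids the issue entirely at the cost of citing the second-associate-norm theorem.
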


\begin{proof}
Suppose that $\cX$ fails Property $(*)$. Then as remarked above, $\cX_n^\sim=L^\infty$.
By Banach Isomorphism Theorem, there exists a constants $C>0$ such that
$$\norm{Y}_\infty\leq C\norm{Y}_*$$
for every $Y\in\cX_n^\sim$.
We claim that there exists $C_1>0$ such that
\begin{align}\label{l1a}\norm{X}\leq C_1\norm{X}_1\end{align}
for every $X\in \cX$.
Indeed, if $\cX$ has order continuous norm, then $\cX_n^\sim=\cX^*$. Thus for every $X\in \cX$,
$$\norm{X}=\sup_{Y\in \cX^*,\norm{Y}_*\leq 1}\E[ {XY}]\leq \sup_{Y\in L^\infty, \norm{Y}_\infty\leq C}\E[\abs{XY}]=C\norm{X}_1.$$
If $\cX$ contains the a.s.-limits of all norm bounded, increasing, positive sequences with terms in $\cX$, by \cite[Proposition~2.4.19(i) and Lemma~2.4.20]{MN:91}, there exists a constant $r>0$ such that
$$\norm{X}\leq r\sup_{Y\in \cX_n^\sim,\norm{Y}_*\leq 1}\E[ {XY}]\leq r\sup_{Y\in L^\infty, \norm{Y}_\infty\leq C}\E[\abs{XY}]=rC\norm{X}_1.$$
This proves the claim.
Now $\cX $ being r.i.\ also yields a constant $C_2>0$ such that \begin{align}\label{l1b}\norm{X}\geq C_2\norm{X}_1\end{align}for every $X\in \cX$.
Combining \eqref{l1a} and \eqref{l1b}, one easily checks that $\cX$ is norm closed in $L^1$. Since $\cX$ is an order ideal of $L^1$ and contains $L^\infty$, it follows that $\cX=L^1$.
\end{proof}

\begin{proposition}\label{p-star-purpose}
If $\cX$ has Property $(*)$, then $\E[X_n]\rightarrow0$ for every norm bounded sequence in $\cX$ that a.s.-converges to $0$.
\end{proposition}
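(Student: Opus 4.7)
The plan is to combine Egorov's theorem with the definition of Property $(*)$, using the duality between $\cX$ and $\cX_n^\sim$. The key observation is that $\one_A\in L^\infty\subset \cX_n^\sim$ for every $A\in\cF$ (since $\cX\subset L^1$, so every bounded function is order continuous on $\cX$), and therefore the duality pairing gives
\[
\bigabs{\E[X\one_A]} \leq \E\bigl[\abs{X}\one_A\bigr] \leq \norm{X}\cdot\norm{\one_A}_*,
\]
which follows from the definition of $\norm{\cdot}_*$ as the dual norm, together with the fact that $\norm{\cdot}$ is a lattice norm so $\norm{\abs{X}} = \norm{X}$.

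Let $(X_n)\subset \cX$ be norm bounded with $M:=\sup_n\norm{X_n}<\infty$ and $X_n\xrightarrow{a.s.}0$. Fix $\ep>0$. By Property $(*)$, choose $\delta>0$ such that $\norm{\one_A}_*<\ep$ whenever $\bP(A)<\delta$. Since the sequence is a.s.-convergent, Egorov's theorem provides a measurable set $A$ with $\bP(A)<\delta$ such that $X_n\to 0$ uniformly on $A^c$. Split
\[
\E[X_n] = \E[X_n\one_{A^c}] + \E[X_n\one_A].
\]
The first term tends to $0$ by uniform convergence on $A^c$ (and the fact that $\bP$ is finite). For the second term, the duality estimate above gives
\[
\bigabs{\E[X_n\one_A]} \leq \norm{X_n}\cdot\norm{\one_A}_* \leq M\ep
\]
for every $n$. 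Hence $\limsup_n\abs{\E[X_n]}\leq M\ep$, and letting $\ep\downarrow 0$ completes the proof.

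There is no real obstacle here; the only mildly delicate point is justifying the duality estimate $\abs{\E[X_n\one_A]}\leq \norm{X_n}\cdot\norm{\one_A}_*$, which reduces to observing that $\one_A$ defines an element of $\cX_n^\sim$ (thanks to $\cX\subset L^1$) with the stated dual norm, and that $\norm{\cdot}$ is a lattice norm so one may replace $X_n$ by $\abs{X_n}$ without changing its norm.
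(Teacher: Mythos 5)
Your proof is correct and rests on the same mechanism as the paper's: split $\E[X_n]$ into a piece where $X_n$ is uniformly small and a piece supported on a set of small probability, and kill the latter via the dual-norm estimate $\abs{\E[X_n\one_A]}\leq\norm{X_n}\,\norm{\one_A}_*$ together with Property $(*)$. The only (cosmetic) difference is that you produce the exceptional set by Egorov's theorem and argue directly, whereas the paper passes to a subsequence with $\bP(\abs{X_{n_k}}\geq\tfrac1k)\leq\tfrac1k$ and argues by contradiction using the $n$-dependent level sets $\{\abs{X_{n_k}}\geq\tfrac1k\}$.
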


\begin{proof}
Let $(X_n)\subset \cX$ be such that $M:=\sup_n\norm{X}<\infty$ and $X_n\xrightarrow{a.s.}0$.
Suppose $\E[X_n]\not\rightarrow0$. By passing to a subsequence, we may assume that $\abs{\E[X_n]}\geq \delta$ for some $\delta>0$ and all $n\in\N$. Since $X_n\xrightarrow{a.s.}0$, we can find a subsequence such that $\bP(\abs{X_{n_k}}\geq \frac{1}{k})\leq \frac{1}{k}$. Then
\begin{align*}
\abs{\E[X_{n_k}]}
\leq &\Bigabs{\E\big[X_{n_k}\one_{\{\abs{X_{n_k}}<\frac{1}{k}\}}\big]}+
\Bigabs{\E\big[X_{n_k}\one_{\{\abs{X_{n_k}}\geq \frac{1}{k}\}} \big]}\\
\leq&\frac{1}{k}\bP\Big(\abs{X_{n_k}}< \frac{1}{k}\Big)+\norm{X_{n_k}}\bignorm{
\one_{\{
\abs{X_{n_k}}\geq \frac{1}{k}
\}}
}_*\\
\leq & \frac{1}{k}+M\bignorm{
\one_{\{
\abs{X_{n_k}}\geq \frac{1}{k}
\}}
}_*\longrightarrow 0.
\end{align*}
This contradiction concludes the proof.
\end{proof}

We remark that the converse of Proposition~\ref{p-star-purpose} also holds.

\end{appendix}


\begin{thebibliography}{100}

\bibitem{A:09}
Acciaio, B: Short note on inf-convolution preserving the Fatou property, Annals of Finance, 5, 281--287 (2009).

\bibitem{AB:06}
Aliprantis, C. D., Burkinshaw, O.: Positive operators. Vol. 119. Springer Science \& Business Media. (2006).

\bibitem{BE:05}
Barrieu, P., El Karoui, N.: Inf-convolution of risk measures and optimal risk transfer, Finance and Stochastics, 9(2), 269--298 (2005).







\bibitem{BS:88}
Bennett, C., Sharpley, R.: Interpolation of Operators. Vol. 129. Academic Press. (1988).


\bibitem{BC:17}
Biagini, S., \v{C}ern\'{y}, A.: Convex duality and Orlicz spaces in expected utility maximization, Preprint: arXiv:1711.09121 (2017)

\bibitem{BF:09} Biagini, S. and Frittelli, M.: On the extension of the Namioka-Klee theorem and on the Fatou property for risk measures, In Optimality and risk-modern trends in mathematical finance (pp. 1--28). Springer Berlin Heidelberg (2009).






\bibitem{D:02}
Delbaen, F.: Coherent risk measures on general probability spaces, Advances in Finance and Stochastics. Springer, Berlin, Heidelberg. 1--37 (2002).

\bibitem{D:05} Delbaen, F.: Hedging bounded claims with bounded outcomes, Advances in Mathematical
Economics 8, Springer, Japan (2006).


\bibitem{ELW:17} Embrechts, P., Liu, H., Mao, T., Wang, R.: Quantile-based Risk Sharing with heterogeneous beliefs, Preprint (2017).

\bibitem{FS:08}
Filipovi\'{c}, D., Svindland, G.: Optimal capital and risk allocations for law- and cash-invariant convex functions, Finance Stochastics, 12(3), 423--439 (2008).

\bibitem{FS:12}
Filipovi\'{c}, D., Svindland, G.: The canonical model space for law-invariant convex risk measures is $L^1$, Mathematical Finance, 22(3), 585--589 (2012).



\bibitem{FG:02} Frittelli, M., Gianin, E.R.: Putting order in risk measures, Journal of Banking \& Finance, 26(7), 1473--1486 (2002).



\bibitem{GLMX:18}
Gao, N., Leung, D., Munari, C., Xanthos, F.: Fatou property, representations, and extensions of law-invariant risk measures on general Orlicz spaces, Finance and Stochastics, 22(2), 395--415 (2018).

\bibitem{GLX:16}
Gao, N., Leung, D., Xanthos, F.: Closedness of convex sets in Orlicz spaces with applications to dual representation of risk measures, Preprint: arXiv:1610.08806 (2017).

\bibitem{GM:18}
Gao, N., Munari, C.: Surplus-invariant risk measures,  Preprint: arXiv:1707.04949v2 (2017).

\bibitem{GTX:17}
Gao, N., Troitsky, V.G., Xanthos, F.: Uo-convergence and its applications to Ces\`{a}ro means in Banach lattices, Israel Journal of Mathematics, 220(2), 649--689, (2017).

\bibitem{GX:18}
Gao, N., Xanthos, F.: On the C-property and w*-representations of risk measures, Mathematical Finance, 28(2), 748--754 (2018).

\bibitem{GX:14}
Gao, N., Xanthos, F.: Unbounded order convergence and application to martingales without probability, Journal of Mathematical Analysis and Applications, 415(2), 931--947 (2014).

\bibitem{JS:08}
Jouini, E., Schachermayer, W., Touzi, N.:Optimal risk sharing for law invariant monetary utility functions, Mathematical Finance, 18(2), 269--292 (2008).

\bibitem{JS:}
Jouini, E., Schachermayer, W., Touzi, N.: Law invariant risk measures have the Fatou Property, Advances in Mathematical
Economics, 9, 49--71 (2006).


\bibitem{LS:18}
Liebrich, F.B., Svindland, G.: Risk sharing for capital requirements with multidimensional security markets, Preprint (2018).

\bibitem{LS:17}
Liebrich, F.B., Svindland, G.: Model spaces for risk measures,
Insurance: Mathematics and Economics, 77, 150--165 (2017).

\bibitem{MG:15}
Mastrogiacomo E., Gianin, E.R.: Pareto optimal allocations and optimal risk sharing for quasiconvex risk measures, Mathematics and Financial Economics, 9(2), 149--167 (2015).


\bibitem{MN:91}
Meyer-Nieberg, P.: Banach lattices. Universitext, Springer-Verlag, Berlin. (1991).

\bibitem{P:13}
Pichler, A.: The natural Banach space for version independent risk measures, Insurance: Mathematics and Economics, 53(2), 405--415 (2013).

\bibitem{S:10}
Svindland, G.: Continuity properties of law-invariant (quasi-)convex risk functions on $L^\infty$, Mathematics and Financial Economics, 3, 39--43 (2010).





\end{thebibliography}
\end{document}